\documentclass[a4paper,10pt]{article}
\usepackage[utf8]{inputenc}
\usepackage{amsthm}
\usepackage{amsmath}
\usepackage{authblk}
\usepackage{mathtools}
\usepackage[mathscr]{euscript}
\bibliographystyle{plain}
\usepackage{cite}
\usepackage{amssymb}
\usepackage{hyperref}
\newtheorem{theorem}{Theorem}

\newtheorem{lemma}{Lemma}
\newtheorem{corollary}{Corollary}

\newtheorem{observation}{Observation}

\newtheorem{remark}{Remark}
\usepackage{filecontents}
\begin{filecontents}{references.bib}
@article{Benaim96,
 author = {M. Bena\"{i}m},
 title = {A Dynamical System Approach to Stochastic Approximations},
 journal = {SIAM J. Control Optim.},
 issue_date = {1996},
 volume = {34},
 number = {2},
 year = {1996},
 pages = {437--472},
} 

@article{Bhatnagar,
  author    = {S. Bhatnagar},
  title     = {The {Borkar-Meyn} theorem for asynchronous stochastic approximations},
  journal   = {Systems {\&} Control Letters},
  volume    = {60},
  number    = {7},
  year      = {2011},
  pages     = {472--478},
}
@book{spall2005,
  title={Introduction to stochastic search and optimization: estimation, simulation, and control},
  author={J.C. Spall},
  volume={65},
  year={2005},
  publisher={John Wiley \& Sons}
}
@article{andrieu,
  title={On the stability of some controlled {M}arkov chains and its applications to stochastic approximation with Markovian dynamic},
  author={C. Andrieu and V.B. Tadi{\'c} and M. Vihola},
  journal={The Annals of Applied Probability},
  volume={25},
  number={1},
  pages={1--45},
  year={2015},
  publisher={Institute of Mathematical Statistics}
}

@article{MOR,
  title={A generalization of the {Borkar-Meyn} theorem for stochastic recursive inclusions},
  author={A. Ramaswamy and S. Bhatnagar},
  journal={Mathematics of Operations Research},
  volume={42},
  number={3},
  pages={648--661},
  year={2016},
  publisher={INFORMS}
}
@article{TAC2018,
  title={Analysis of Gradient Descent Methods With Nondiminishing Bounded Errors},
  author={A. Ramaswamy and S. Bhatnagar},
  journal={IEEE Transactions on Automatic Control},
  volume={63},
  number={5},
  pages={1465--1471},
  year={2018},
  publisher={IEEE}
}

@article{BorkarM,
  author    = {V. S. Borkar},
  title     = {Stochastic approximation with `controlled {M}arkov' noise},
  journal   = {Systems {\&} Control Letters},
  volume    = {55},
  number    = {2},
  pages     = {139--145},
  year      = {2006},
}

@article{Tsitsiklis,
    author = {J. N. Tsitsiklis and B. Van Roy},
    title = {An analysis of temporal-difference learning with function approximation},
    journal = {IEEE Transactions on Automatic Control},
    year = {1997}
}

@ARTICLE{Borkar99,
    author = {V. S. Borkar and S.P. Meyn},
    title = {The {O.D.E.} Method for Convergence of Stochastic Approximation and Reinforcement Learning},
    journal = {SIAM J. Control Optim},
    year = {1999},
    volume = {38},
    pages = {447--469}
}
@book{BertsekasBook,
 author = {D.P. Bertsekas and J.N. Tsitsiklis},
 title = {Neuro-Dynamic Programming},
 year = {1996},
 isbn = {1886529108},
 edition = {1st},
 publisher = {Athena Scientific},
} 
@ARTICLE{Benaim05,
    author = {M. Bena\"{i}m and J. Hofbauer and S. Sorin},
    title = {Stochastic approximations and differential inclusions},
    journal = {SIAM Journal on Control and Optimization},
    year = {2005},
    pages = {328--348}
}
@ARTICLE{Benaim05,
    author = {M. Bena\"{i}m and J. Hofbauer and S. Sorin},
    title = {Stochastic approximations and differential inclusions},
    journal = {SIAM Journal on Control and Optimization},
    year = {2005},
    pages = {328--348}
}

@BOOK{Aubin,
    author = {J. Aubin and A. Cellina},
    title = {Differential Inclusions: Set-Valued Maps and Viability Theory},
    publisher = {Springer},
    year = {1984},
}

@BOOK{KushnerYin,
    author = {H. Kushner and G.G. Yin},
    title = {Stochastic Approximation and Recursive Algorithms and Applications},
    publisher = {Springer} ,
    year = {2003},
}

@BOOK{BorkarBook,
    author = {V. S. Borkar},
    title = {Stochastic Approximation: A Dynamical Systems Viewpoint},
    publisher = {Cambridge University Press},
    year = {2008},
}

@BOOK{AubinSet,
    author = {J. Aubin and H. Frankowska},
    title = {Set-Valued Analysis},
    publisher = {Birkhäuser},
    year = {1990},
}

@article{Ljung77,
  title={Analysis of recursive stochastic algorithms},
  author={L. Ljung},
  journal={Automatic Control, IEEE Transactions on},
  volume={22},
  number={4},
  pages={551--575},
  year={1977},
}
@article{Benaim96,
 author = {M. Bena\"{i}m},
 title = {A Dynamical System Approach to Stochastic Approximations},
 journal = {SIAM J. Control Optim.},
 issue_date = {1996},
 volume = {34},
 number = {2},
 year = {1996},
 pages = {437--472},
} 
@article{Benaim99,
  title={Dynamics of stochastic approximation algorithms},
  author={M. Bena\"{i}m},
  journal={Seminaire de probabilites XXXIII},
  pages={1--68},
  year={1999},
  publisher={Springer Berlin Heidelberg}
}

@book{Benveniste,
 author = {A. Benveniste and M. Metivier and P. Priouret},
 title = {Adaptive Algorithms and Stochastic Approximations},
 year = {2012},
 isbn = {3642758967, 9783642758966},
 edition = {1st},
 publisher = {Springer Publishing Company, Incorporated},
} 
@article{BenaimHirsch,
 author = {M. Bena\"{i}m and M. W. Hirsch},
 title = {Asymptotic pseudotrajectories and chain recurrent flows, with applications},
 journal = {J. Dynam. Differential
Equations},
 issue_date = {1996},
 volume = {8},
 year = {1996},
 pages = {141--176},
} 

@article{sutton1988,
  title={Learning to predict by the methods of temporal differences},
  author={Sutton, Richard S},
  journal={Machine learning},
  volume={3},
  number={1},
  pages={9--44},
  year={1988},
  publisher={Springer}
}
}
\end{filecontents}
\title{Stability of Stochastic Approximations with `Controlled Markov' Noise
and Temporal Difference Learning}
 \author{Arunselvan~Ramaswamy \texttt{arunr@mail.uni-paderborn.de} \thanks{ Dept. of Electrical Engineering and Information Technology,
 Paderborn University,
 Paderborn - 33908, Germany. His position was funded by the German Research Foundation (DFG) - 315248657.} \thanks{This research was conducted when Ramaswamy was a Ph.D. student at Indian Institute of Science.},
       Shalabh~Bhatnagar \texttt{shalabh@iisc.ac.in} \thanks{
     Department of Computer Science and Automation and the Robert Bosch
 Centre for Cyber Physical Systems, Indian Institute of Science, Bangalore - 560012, India.}}
\begin{document}
\maketitle


%

\begin{abstract}
We are interested in understanding stability (almost sure boundedness) of stochastic approximation algorithms (SAs) driven by a `controlled Markov' process. Analyzing this class of algorithms is important, since many reinforcement learning (RL) algorithms can be cast as SAs driven by a `controlled Markov' process. In this paper, we present easily verifiable sufficient conditions for stability and convergence of SAs driven by a `controlled Markov' process. Many RL applications involve continuous state spaces. While our analysis readily ensures stability for such continuous state applications, traditional analyses do not. As compared to literature, our analysis presents a two-fold generalization (a) the Markov process may evolve in a continuous state space and (b) the process need not be ergodic under any given stationary
 policy. Temporal difference learning (TD) is an important policy evaluation method in reinforcement learning. The theory developed herein, is used to analyze generalized $TD(0)$, an important variant of TD. Our theory is also used to analyze a TD formulation of supervised learning for forecasting problems.
\end{abstract}
\section{Introduction}\label{sec:introduction}
Reinforcement learning (RL) algorithms such as Q learning, temporal difference learning and value iteration methods have seen a major resurgence in recent years as model-free, yet simple and effective, solutions to many important problems. RL is used to solve problems in fields ranging from health-care to transportation. As RL becomes ubiquitous in solving critical problems, there is a need to provide ``behavioral guarantees'' for RL. Stochastic approximation algorithms (SAs) are an important class of model-free algorithms, with associated analytical tools, that play an important role in providing such guarantees. The important foundational papers on SAs include \cite{Ljung77}, \cite{Benaim96,Benaim99}, \cite{BenaimHirsch} and \cite{Borkar99}. Recent results in this field include \cite{MOR, TAC2018} and \cite{andrieu}.

SAs with `controlled' Markov noise are an important subclass of algorithms, particularly since many RL algorithms can be cast in this setting. The groundwork for analyzing such algorithms was laid by Benveniste et. al. \cite{Benveniste} and Borkar \cite{BorkarM}. The analysis of \cite{Benveniste} requires the Markov process to evolve in a finite state space and be ergodic. The analysis of \cite{BorkarM} allows for continuous state spaces and the process may be governed by an additional control-valued sequence in addition to the parameter iterates, a setting that we also consider, and be non-ergodic, \textit{i.e.,} it can have multiple stationary distributions. However, \cite{BorkarM} requires that stability (almost sure boundedness of the algorithm) is ensured. Stability is a highly non-trivial assumption as there is no easy way to ensure compliance with this requirement.

Many RL applications involve Markov processes that evolve over a continuous state space. Here, ensuring stability is especially hard. The main contribution of this paper is the development of easily verifiable sufficient conditions for stability and convergence of SAs driven with an iterate dependent Markov process that may depend on another control-valued sequence. \textit{Our analysis presents a two-fold generalization over traditional ones (a) the Markov process may evolve in a continuous state space and (b) the process need not be ergodic under any given stationary policy.} Under our conditions, the algorithm is shown to be stable and it tracks a solution to a limiting differential inclusion (DI), defined in terms of the ergodic occupation measures of the Markov process. Further, the limiting set is internally chain transitive and invariant. Our stability assumptions are particularly interesting, since they can be readily used to ensure stability in many RL applications, and are compatible with traditional convergence analyses.

Temporal difference learning (TD) is an important RL algorithm that is popularly employed in `policy evaluation' problems. $TD(0)$ is an important variant of TD that is effective, yet simple to implement. Our theory is used to provide a complete analysis of generalized $TD(0)$. Previously, Tsitsiklis and Van Roy \cite{Tsitsiklis} have analyzed TD. However, \cite{Tsitsiklis} assumes that the Markov process is ergodic and evolves in a finite state space. Further, the second moments of the single stage rewards are assumed to be bounded. Our analysis eliminates the need to impose such restrictions, see Section~\ref{application} for details.

As yet another application of our theory, we analyze a \textit{TD formulation of supervised learning}, to solve the weather forecasting problem described in \textit{Chapter 11} of Spall \cite{spall2005}. It may be noted that the analyses in \cite{Tsitsiklis} and \cite{andrieu} cannot be applied to analyze the aforementioned algorithm.
\subsection{Notations \& Definitions} \label{definitions}
\noindent
\textit{\textbf{[Upper-semicontinuous map]}} We say that $H$ is upper-semicontinuous,
  if given sequences $\{ x_{n} \}_{n \ge 1}$ (in $\mathbb{R}^{n}$) and 
  $\{ y_{n} \}_{n \ge 1}$ (in $\mathbb{R}^{m}$)  with
  $x_{n} \to x$, $y_{n} \to y$ and $y_{n} \in H(x_{n})$, $n \ge 1$, 
  then $y \in H(x)$.
\\
\textit{\textbf{[Marchaud Map]}} A set-valued map $H: \mathbb{R}^n \to \{subsets\ of\ \mathbb{R}^m$ \} 
is called \textit{Marchaud} if it satisfies
the following properties:
 \textbf{(i)} for each $x$ $\in \mathbb{R}^{n}$, $H(x)$ is convex and compact;
 \textbf{(ii)} \textit{(point-wise boundedness)} for each $x \in \mathbb{R}^{n}$,  
 $\underset{w \in H(x)}{\sup}$ $\lVert w \rVert$
 $< K \left( 1 + \lVert x \rVert \right)$ for some $K > 0$;
 \textbf{(iii)} $H$ is \textit{upper-semicontinuous}. \\
Let $H$ be a Marchaud map on $\mathbb{R}^d$.
The differential inclusion (DI) given by
\begin{equation} \label{di}
\dot{x} \ \in \ H(x)
\end{equation}
is guaranteed to have at least one solution that is absolutely continuous. 
The reader is referred to \cite{Aubin} for more details.
We say that $\textbf{x} \in \sum$ if $\textbf{x}$ 
is an absolutely continuous map that satisfies (\ref{di}).
The \textit{set-valued semiflow}
$\Phi$ associated with (\ref{di}) is defined on $[0, + \infty) \times \mathbb{R}^d$ as: \\
$\Phi_t(x) = \{\textbf{x}(t) \ | \ \textbf{x} \in \sum , \textbf{x}(0) = x \}$. Let
$B \times M \subset [0, + \infty) \times \mathbb{R}^d$ and define
\begin{equation}\nonumber
 \Phi_B(M) = \underset{t\in B,\ x \in M}{\bigcup} \Phi_t (x).
\end{equation}
\\
\textit{\textbf{[$\omega$-limit set]}}
Given $M \subseteq \mathbb{R}^d$, the $\omega$-limit \textit{set} is defined as
$
 \omega_{\Phi}(M) = \bigcap_{t \ge 0} \ \overline{\Phi_{[t, +\infty)}(M)}.
$
\\
\textit{\textbf{[Limit set of a solution]}} The limit set of a solution $\textbf{x}$
with $\textbf{x}(0) = x$ is given by
$L(x) = \bigcap_{t \ge 0} \ \overline{\textbf{x}([t, +\infty))}$.
\\
\textit{\textbf{[Invariant set]}}
$M \subseteq \mathbb{R}^d$ is \textit{invariant} if for every $x \in M$ there exists 
a trajectory, $\textbf{x} \in \sum$, entirely in $M$
with $\textbf{x}(0) = x$, $\textbf{x}(t) \in M$,
for all $t \ge 0$.
\\ \textit{\textbf{[Open and closed neighborhoods of a set]}}
Let $x \in \mathbb{R}^d$ and $A \subseteq \mathbb{R}^d$, then
$d(x, A) : = \inf \{\lVert a- y \rVert \ | \ y \in A\}$. We define the $\delta$-\textit{open neighborhood}
of $A$ by $N^\delta (A) := \{x \ |\ d(x,A) < \delta \}$. The 
$\delta$-\textit{closed neighborhood} of $A$ 
is defined by $\overline{N^\delta} (A) := \{x \ |\ d(x,A) \le \delta \}$.
The open ball of radius $r$ around the origin is represented by $B_r(0)$,
while the closed ball is represented by $\overline{B}_r(0)$.
\\ \textit{\textbf{[Internally chain transitive set]}}
$M \subset \mathbb{R}^{d}$ is said to be
internally chain transitive if $M$ is compact and for every $x, y \in M$,
$\epsilon >0$ and $T > 0$ we have the following: There exists $n$ and $\Phi^{1}, \ldots, \Phi^{n}$ that
are $n$ solutions to the differential inclusion $\dot{x}(t) \in H(x(t))$,
points $x_1(=x), \ldots, x_{n+1} (=y) \in M$
and $n$ real numbers 
$t_{1}, t_{2}, \ldots, t_{n}$ greater than $T$ such that: $\Phi^i_{t_{i}}(x_i) \in N^\epsilon(x_{i+1})$ and
$\Phi^{i}_{[0, t_{i}]}(x_i) \subset M$ for $1 \le i \le n$. The sequence $(x_{1}(=x), \ldots, x_{n+1}(=y))$
is called an $(\epsilon, T)$ chain in $M$ from $x$ to $y$.
\\ \textit{\textbf{[Attracting set, fundamental neighborhood \& attractor]}}
$A \subseteq \mathbb{R}^d$ is \textit{attracting} if it is compact
and there exists a neighborhood $U$ such that for any $\epsilon > 0$,
$\exists \ T(\epsilon) \ge 0$ with $\Phi_{[T(\epsilon), +\infty)}(U) \subset
N^{\epsilon}(A)$. Such a $U$ is called the \textit{fundamental neighborhood} of $A$. 
In addition to being compact if the \textit{attracting set} is also invariant then
it is called an \textit{attractor}.
The \textit{basin
of attraction } of $A$ is given by $B(A) = \{x \ | \ \omega_\Phi(x) \subset A\}$.
\\
\textit{\textbf{[Lyapunov stable]}} The above set $A$ is Lyapunov stable 
if for all $\delta > 0$, $\exists \ \epsilon > 0$ such that
$\Phi_{[0, +\infty)}(N^\epsilon(A)) \subseteq N^\delta(A)$.
\\ \textit{\textbf{[Upper-limit of a sequence of sets, Limsup]}}
Let $\{K_{n}\}_{n \ge 1}$ be a sequence of sets in $\mathbb{R}^{d}$. 
The \textit{upper-limit} of $\{K_{n}\}_{n \ge 1}$ is
given by, 
$Limsup_{n \to \infty} K_n$ $:= \ \{y \ | \ 
\underset{n \to \infty}{\underline{lim}}d(y, K_n)= 0 \}$. \\
We may interpret that the upper-limit
collects its accumulation points.

\section{Assumptions} \label{assumptions}
As stated earlier, we are motivated by the need to analyze RL algorithms. However, we present our analysis for a more general class of algorithms: SAs driven by an iterate-dependent `controlled Markov process'. Later, we shall recast this analysis to understand generalized $TD(0)$ and a TD formulation of supervised learning with delayed rewards. It may be noted that the analysis of TD for supervised learning was only possible due to our consideration of the general class.

Let us begin by describing a $SA$ driven by a `controlled Markov' process, following which we list the assumptions involved. Since we use results from \cite{BorkarM} in our convergence analysis, we relate our assumptions with those of \cite{BorkarM}. We have the following recursion in $\mathbb{R}^d$:
\begin{equation}\label{eq:sre}
x_{n+1} = x_n + a(n) \left[ h(x_n, y_n) + M_{n+1} \right].
\end{equation}

 \textbf{(A1)(i)} $h: \mathbb{R}^d \times S \to \mathbb{R}^d$ is a jointly continuous map, and
 $S$ a compact metric space. The map $h$ is
 Lipschitz continuous in the first component, with Lipschitz constant $L$, which
 does not change with the second component.\footnote{
 assumption $(1)$ in \textit{Section 2}, \cite{BorkarM}.}
 
  \textbf{(A1)(ii)}$\{y_n\}_{n \ge 0}$ is an $S$-valued `controlled Markov' process controlled by (a) the iterate sequence $\{x_n\}$ and (b) an additional control-valued sequence. \footnote{$S$ is a compact metric space, and hence Polish.}
 
 \textbf{(A2)} $\{M_n\}_{n \ge 1}$ is a square integrable martingale difference sequence (noise). Further, 
 $ E \left[ \lVert M_{n+1}\rVert ^2 \ |\ \mathcal{F}_n \right] \le K(1+ \lVert x_n \rVert ^2), \text{ where } n \ge 0
 $
 and $\mathcal{F}_n := \sigma \left\langle x_m , y_, M_m; m \le n \right\rangle$, $n \ge 0$.
 \footnote{assumption $(2)$ in \textit{Section 2}, \cite{BorkarM}.}
 
 \textbf{(A3)} The step-size sequence $\{a(n)\}_{n \ge 0}$ satisfies the following: 
 it is non-increasing, $a(n) > 0$ for all $n \ge 0$,
 $\sum_{n=0}^\infty a(n) = \infty$ and $\sum_{n=0}^\infty a(n)^2 < \infty$. Without loss of generality
 let $\underset{n \ge 0}{\sup}\ a(n) \le 1$.
 \footnote{assumption $(3)$ in \textit{Section 2}, \cite{BorkarM}.}

Before proceeding, we define a family of rescaled functions as follows. For each $c \ge 1$, define $h_c : \mathbb{R}^d \times S \to \mathbb{R}^d$
 by $h_c(x, y) := h(cx, y)/c$.
Also define, 
$h_\infty: \mathbb{R}^d \times S \to \{\text{subsets of }\mathbb{R}^d\}$ by $h_\infty(x, y) :=$
$Limsup_{c \to \infty} \{h_c(x, y)\}$, where $Limsup$ is the upper-limit of a sequence of sets (see Section~\ref{definitions}). 
Finally, define the set-valued map, $H$, as
$H(x) := \overline{co} \left( \underset{y \in S}{\cup} h_\infty(x, y) \right)$, where $x \in \mathbb{R}^d$. In Lemma~\ref{Hismarchaud} we show that $H$ is Marchaud. Consequently, there exists a solution to the $DI$ $\dot{x}(t) \in H(x(t))$, see \cite{Aubin}
for details.

Below we present our key stability assumptions, $(S1)$ and $(S2)$. They are based on the limiting behavior of the objective function $h$.

 \textbf{(S1)} If $c_n \uparrow \infty$, $y_n \to y$ and 
 $\underset{n \to \infty}{\lim} h_{c_n}(x,y_n)$ = $u$ for some $u \in \mathbb{R}^d$, then
 $u \in h_\infty(x, y)$.
 
 \textbf{(S2)} There exists an attracting set, $\mathcal{A}$, associated with $\dot{x}(t) \in H(x(t))$
 such that $\underset{u \in \mathcal{A}}{\sup}\lVert u \rVert < 1$ and $\overline{B}_1 (0)$
 is a fundamental neighborhood of $\mathcal{A}$.

 It follows from $(S2)$ that we can find $\delta_1, \delta_2, \delta_3$ and $\delta_4$ such that
 $\delta_1 := \underset{u \in \mathcal{A}}{\sup} \ \lVert u \rVert$ and
$\delta_1 < \delta_2 < \delta_3 < \delta_4 < 1$.

\begin{observation}\label{hc_observe}{
$h_c$ is a jointly continuous function that is Lipschitz continuous in the first component. Further,
the Lipschitz constant is independent of the second component and $c$. Without loss of generality we take this Lipschitz constant to be $L$ from $(A1)$.
Since $S$ is compact,
$\lVert h_c(0, \cdot) \rVert _\infty \le \lVert h(0, \cdot) \rVert _\infty \le \overline{M}$ for some $0 < \overline{M} < \infty$. It now follows from
$ \lVert h_c(x,y) - h_c(0,y) \rVert \le L\lVert x \rVert$, that
$\lVert h_c(x,y) \rVert \le \lVert h_c(0,y) \rVert + L \lVert x \rVert$ and
$ \lVert h_c (x, y) \rVert \le  K \left( 1 +  \lVert x \rVert \right)$ are satisfied, where $K := L \vee \overline{M}$.
 Again, without loss of generality this $K$ is from $(A2)$ and does not change with $c$. In other words,
 $\underset{c \ge 1}{\sup} \lVert h_c(x,y) \rVert \le K(1 + \lVert x \rVert)$. Further, it follows from the definition of $h_\infty$ and $H$ that
\begin{equation}\label{Hpwb}
 \underset{u \in H(x)}{\text{sup}}  \lVert u \rVert \le K(1 + \lVert x \rVert). 
\end{equation}
 }
\end{observation}
As stated earlier, we need to show that the set-valued map $H$ is Marchaud. First, we prove a technical lemma that is needed for this purpose.
\begin{lemma}\label{hcross}
 Suppose $x_n \to x$ in $\mathbb{R}^d$, $y_n \to y$ in $S$, $c_n \uparrow \infty$
 and $\underset{c_n \uparrow \infty}{\lim} h_{c_n}(x_n , y_n) = u$. Then $u \in h_\infty(x, y)$.
\end{lemma}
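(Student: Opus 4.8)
The plan is to reduce the statement to assumption (S1), whose only difference from the lemma is that (S1) keeps the first argument fixed, whereas here $x_n \to x$ is allowed to vary. The bridge between the two is the Lipschitz continuity of $h_c$ in its first component, with a constant $L$ that is \emph{uniform} in both $c$ and $y$ — this uniformity was established earlier in the excerpt and is exactly what makes the reduction work.

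First I would freeze the varying first argument. Since each $h_{c_n}$ is Lipschitz in the first component with the single constant $L$, I have
\[
\lVert h_{c_n}(x_n, y_n) - h_{c_n}(x, y_n) \rVert \le L \lVert x_n - x \rVert.
\]
As $x_n \to x$, the right-hand side tends to $0$, so $h_{c_n}(x_n, y_n)$ and $h_{c_n}(x, y_n)$ have the same limiting behavior. Combining this with the hypothesis $h_{c_n}(x_n, y_n) \to u$ gives, by the triangle inequality, that $h_{c_n}(x, y_n) \to u$ as well.

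Now the situation matches (S1) exactly: the first argument is the fixed point $x$, we have $c_n \uparrow \infty$, $y_n \to y$, and $\lim_n h_{c_n}(x, y_n) = u$. Invoking (S1) then yields $u \in h_\infty(x, y)$, which is the claim.

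I do not expect a genuine obstacle here; the proof is essentially a two-line estimate followed by a citation of (S1). The only point requiring care — and the reason the lemma is worth stating separately — is that the Lipschitz constant must not depend on $c$ or on $y$; otherwise the bound $L\lVert x_n - x\rVert$ could fail to vanish as $c_n \uparrow \infty$. Since that uniformity has already been verified, the argument goes through cleanly, and the lemma simply upgrades (S1) from a fixed spatial argument to a convergent sequence of spatial arguments.
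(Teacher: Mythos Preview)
Your proposal is correct and follows essentially the same approach as the paper: use the uniform Lipschitz bound $\lVert h_{c_n}(x_n,y_n)-h_{c_n}(x,y_n)\rVert \le L\lVert x_n-x\rVert$ to transfer the limit from $h_{c_n}(x_n,y_n)$ to $h_{c_n}(x,y_n)$, then invoke (S1). The paper writes the same triangle-inequality estimate and draws the same conclusion.
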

\begin{proof}
 It follows from Observation~\ref{hc_observe}
 that $\lVert h_{c_n} (x, y_n) -  h_{c_n} (x_n, y_n)\rVert \le L \lVert x_n - x \rVert$.
 Since $x_n \to x$ the sequences $\{h_{c_n}(x,y_n)\}_n$ and $\{h_{c_n}(x_n,y_n)\}_n$
 have the same limit as $n \to \infty$ \textit{i.e.,} $u$. It now follows from assumption $(S1)$
 that $u \in h_\infty(x,y)$.
\end{proof}

We claim the following: if $x_n \to x$ in $\mathbb{R}^d$, $\{y_n\} \subset S$
and $c_n \to \infty$ then $d(h_{c_n}(x_n, y_n), H(x)) \to 0$. Suppose this claim were false, then, without loss of generality,
$d(h_{c_n}(x_n, y_n), H(x)) > \epsilon$ for some $\epsilon > 0$, $n \ge 0$. 
Since $S$ is compact, $\exists \{m(n)\} \subseteq \{n\}$ such that $c_{m(n)} \uparrow \infty$,
$\underset{m(n) \to \infty}{\lim} y_{m(n)} = y$ and $h_{c_{m(n)}}(x_{m(n)}, y_{m(n)}) \to u$ for some
$y \in S$ and $u \in \mathbb{R}^d$.
Hence, $x_{m(n)} \to x$, $y_{m(n)} \to y$,
$c_{m(n)} \uparrow \infty$, $h_{c_{m(n)}}(x_{m(n)}, y_{m(n)}) \to u$ and $u \notin h_\infty (x, y) \subseteq H(x)$. This contradicts Lemma~\ref{hcross}.
\begin{lemma}\label{Hismarchaud}
 The set-valued map $H$ is Marchaud.
\end{lemma}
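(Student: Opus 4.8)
The plan is to verify the three defining properties of a Marchaud map for $H$ in turn; only upper-semicontinuity requires genuine work. By construction $H(x) = \overline{co}\left(\cup_{y \in S} h_\infty(x,y)\right)$ is a closed convex set, hence convex. The inequality (\ref{Hpwb}) already gives $\sup_{u \in H(x)} \lVert u \rVert \le K(1+\lVert x \rVert)$, so $H(x)$ is bounded; being closed and bounded in $\mathbb{R}^d$ it is compact, which settles property (i). The same inequality delivers the point-wise boundedness (ii) after replacing $K$ by, say, $K+1$ to turn the $\le$ into the strict $<$ demanded by the definition.

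The substance is the upper-semicontinuity, i.e. the closed graph property: whenever $x_n \to x$, $w_n \in H(x_n)$ and $w_n \to w$, we must conclude $w \in H(x)$. The difficulty is that $w_n$ lies in a \emph{closed convex hull} rather than in the raw set $\cup_y h_\infty(x_n, y)$, so Lemma~\ref{hcross} cannot be applied to $w_n$ directly. The device I would use is Carath\'eodory's theorem: choose $\tilde{w}_n \in co\left(\cup_y h_\infty(x_n,y)\right)$ with $\lVert \tilde{w}_n - w_n \rVert < 1/n$ (possible since $w_n$ lies in the closure of this convex hull), and write $\tilde{w}_n = \sum_{j=1}^{d+1} \lambda_{n,j} v_{n,j}$ with $\lambda_{n,j}\ge 0$, $\sum_j \lambda_{n,j}=1$, and $v_{n,j} \in h_\infty(x_n, y_{n,j})$ for suitable $y_{n,j} \in S$.

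Then I would extract limits along a common subsequence. The coefficients $\lambda_{n,j}$ lie in the compact simplex, the points $y_{n,j}$ in the compact space $S$, and the vectors $v_{n,j}$ are bounded uniformly in $n$ by (\ref{Hpwb}) (as $x_n$ converges, it is bounded); hence, passing to a subsequence, $\lambda_{n,j}\to\lambda_j$, $y_{n,j}\to y_j$ and $v_{n,j}\to v_j$ for each of the finitely many indices $j$. For each $j$, the membership $v_{n,j}\in h_\infty(x_n,y_{n,j})$ lets me select $c_{n,j}\uparrow\infty$ with $h_{c_{n,j}}(x_n,y_{n,j})\to v_j$, so Lemma~\ref{hcross} yields $v_j \in h_\infty(x,y_j)$. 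Passing to the limit in the convex combination then gives $w = \sum_j \lambda_j v_j \in co\left(\cup_y h_\infty(x,y)\right) \subseteq H(x)$, which is precisely the closed graph condition.

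The step I expect to be the main obstacle is exactly this convex-combination bookkeeping: reducing membership in the closed convex hull to a fixed finite number ($d+1$) of generators via Carath\'eodory, and then arranging all the extractions along a single subsequence so that Lemma~\ref{hcross} can be invoked generator-by-generator. Once that reduction is in place, the remainder is a direct appeal to the bound (\ref{Hpwb}) and to Lemma~\ref{hcross}.
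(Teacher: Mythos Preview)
Your proof is correct and takes a genuinely different route from the paper's for the upper-semicontinuity step. The paper argues by contradiction via a separating hyperplane: assuming $w\notin H(x)$, it produces a linear functional $f$ with $\sup_{v\in H(x)} f(v)\le\alpha-\epsilon$ and $f(w)\ge\alpha+\epsilon$, observes that then $H(x_n)\cap[f\ge\alpha+\epsilon/2]\neq\emptyset$ for large $n$, and uses the fact that a half-space is closed and convex to push this intersection down to the generating set $A(x_n)=\cup_y h_\infty(x_n,y)$ itself. This yields a \emph{single} sequence $w_n\in h_\infty(x_n,y_n)$ with $f(w_n)\ge\alpha+\epsilon/2$, and one application of Lemma~\ref{hcross} along a subsequence produces a point of $h_\infty(x,y)\subseteq H(x)$ in the wrong half-space, a contradiction.

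Your Carath\'eodory decomposition is more direct and constructive: instead of passing through a separating functional, you keep track of $d+1$ generator sequences simultaneously and invoke Lemma~\ref{hcross} once per coordinate. The price is the extra bookkeeping of extracting a common subsequence along which all $3(d+1)$ objects (weights, $S$-points, vectors) converge, and a small diagonal step---for each $n,j$ you really pick $c_{n,j}\ge n$ with $\lVert h_{c_{n,j}}(x_n,y_{n,j})-v_{n,j}\rVert<1/n$, so that convergence $h_{c_{n,j}}(x_n,y_{n,j})\to v_j$ follows from $v_{n,j}\to v_j$---which you gloss over but which is routine. The paper's separation trick buys a cleaner reduction to a single sequence; your approach avoids the indirect argument and lands the limit point explicitly inside $co\bigl(\cup_y h_\infty(x,y)\bigr)$ rather than merely in $H(x)$.
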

\begin{proof}
Recall that $H(x) = \overline{co}\left( \underset{y \in S}{\cup} h_\infty(x, y) \right)$.
As explained earlier (\textit{cf.} (\ref{Hpwb})),
\[ \underset{u \in H(x)}{\sup} \lVert u \rVert \le K(1 + \lVert x \rVert). \]
Hence $H$ is point-wise bounded. From the definition of $H$ it follows that $H(x)$ is convex
and compact for each $x \in \mathbb{R}^d$.

It is left to show that $H$ is upper semi-continuous.
Let $x_{n} \to x$, $u_{n} \to u$ and $u_{n} \in H(x_n)$, 
$n \ge 1$.
 We need to show that $u \in H(x)$. If this is not true, then
 there exists a linear functional on $\mathbb{R}^{d}$, say $f$, such that
 $\underset{v \in H(x)}{\sup}$ $f(v) \le \alpha - \epsilon$
 and $f(u) \ge \alpha + \epsilon$, for some
 $\alpha \in \mathbb{R}$ and $\epsilon > 0$. 
 Since $u_{n} \to u$, there exists 
 $N$ such that for each $n \ge N$ $f(u_{n}) \ge \alpha + \frac{\epsilon}{2}$, \textit{i.e.},
 $H(x_n) \cap  [f \ge \alpha + \frac{\epsilon}{2}] \neq \phi$, here $[f \ge a]$ is used to denote the set
 $\left\{ x \ |\ f(x) \ge a \right\}$. For the sake of notational convenience let us denote
 $\underset{y \in S}{\cup} h_\infty (x, y)$ by $A(x)$ for all $x \in \mathbb{R}^{d}$.
 We claim that $A(x_{n}) \cap [f \ge \alpha + \frac{\epsilon}{2}] \neq \phi$
 for all $n \ge N$. We shall prove this claim later,
 for now we assume that the claim
 is true and proceed. 
 
 Pick $w_{n} \in A(x_{n}) \cap [f \ge \alpha + \frac{\epsilon}{2}]$
 for each $n \ge N$. 
 Let $w_n \in h_\infty(x_n , y_n)$ for some $y_n \in S$.
 Since $\{w_{n}\}_{n \ge N}$ is norm bounded
 it contains a convergent subsequence, say
 $\{w_{n(k)}\}_{k \ge 1} \subseteq \{w_{n}\}_{n \ge N}$. 
 Let $\underset{k \to \infty}{\lim} w_{n(k)} = w$.
Since $w_{n(k)} \in h_\infty(x_{n(k)}, y_{n(k)})$, 
 $\exists$ $c_{n(k)} \in \mathbb{N}$ such that $\lVert w_{n(k)} - h_{c_{n(k)}}(x_{n(k)}, y_{n(k)}) \rVert
 < \frac{1}{n(k)}$. 
 The sequence
 $\{c_{n(k)}\}_{k \ge 1}$ is chosen such that $c_{n(k+1)} > c_{n(k)}$ for each $k \ge 1$.
 Since $\{y_{n(k)} \}_{k \ge 1}$ is from a compact set, there exists a convergent subsequence. For
 the sake of notational convenience (without loss of generality) we assume that the sequence
 itself has a limit, \textit{i.e.},
 $y_{n(k)} \to y$ for some $y \in S$.
 We have the following: $c_{n(k)} \uparrow \infty$, $x_{n(k)} \to x$, $y_{n(k)} \to y$,
 $w_{n(k)} \to w$ and $w_{n(k)} \in h_{c_{n(k)}}(x_{n(k)}, y_{n(k)})$ for $ k \ge 1$. 
 It follows from Lemma~\ref{hcross} that $w \in h_\infty(x, y)$. Since $w_{n(k)} \to w$
 and $f(w_{n(k)}) \ge \alpha + \frac{\epsilon}{2}$ for each $k \ge 1$, we have that
 $f(w) \ge \alpha + \frac{\epsilon}{2}$. This contradicts
 $\underset{w \in H(x)}{\sup}$ $f(w) \le \alpha - \epsilon$.
 
 It remains to prove that  $A(x_{n}) \cap [f \ge \alpha + \frac{\epsilon}{2}] \neq \phi$
 for all $n \ge N$. If this were not true, then
 $\exists \{m(k)\}_{k \ge 1} \subseteq \{n \ge N\}$ 
 such that $A(x_{m(k)}) \subseteq [f < \alpha + \frac{\epsilon}{2}]$
 for all $k$. It follows that
$H(x_{m(k)}) = \overline{co}(A(x_{m(k)})) \subseteq 
 [f \le \alpha + \frac{\epsilon}{2}]$ for each $k \ge 1$. 
 Since $u_{n(k)} \to u$, $\exists N_{1}$ such that for all $n(k) \ge N_1$, 
 $f(u_{n(k)}) \ge \alpha + \frac{3 \epsilon}{4}$. This leads to a contradiction.
\end{proof}
 In the following section, we show that (\ref{eq:sre}) is stable, provided $(A1)$-$(A3)$, $(S1)$ and $(S2)$ are satisfied. Following this, in Section~\ref{convergence}, we show that (\ref{eq:sre}) tracks a solution to a limiting DI, defined in terms of the ergodic occupation measures.
\section{Stability analysis} \label{stabilitytheorem}
For our analysis, we need to define a continuous trajectory, $\overline{x}([0, \infty))$, such that the limit of this trajectory coincides with that of $\{x_n\}_{n \ge 0}$. For this, we divide time axis as follows:
$t(0) := 0$ and $t(n) \ := \ \sum_{i=0}^{n-1} a(i)$, $\forall n \ge 1$. Then we let,
$\overline{x}(t(n)) \ := \ x_{n}$ $\forall n \ge 0$, and for 
$t \ \in \ (t(n), t(n+1))$ let {\footnotesize
\begin{equation*}
\overline{x}(t) \  :=  \ 
 \left( \frac{t(n+1) - t}{t(n+1) - t(n)} \right)\ \overline{x}(t(n))  +  
 \left(  \frac{t - t(n)}{t(n+1) - t(n)} \right) \  
\overline{x}(t(n+1)). 
\end{equation*}}

The time axis is further divided into intervals of approximate length $T$, where $T := T(\delta_2 - \delta_1) + 1$ and $\delta_1, \delta_2$ are defined in Section~\ref{assumptions}. Now, we define $T(\epsilon)$ for $\epsilon > 0$ as follows:
given a solution $x(\cdotp)$ to $\dot{x}(t) \in H(x(t))$ with $\lVert x(0) \rVert \le 1$, 
$x(t) \in N^\epsilon(\mathcal{A})$ for $t \ge T(\epsilon)$. Given $\epsilon > 0$, $\exists\ T(\epsilon)$ with the aforementioned property, since $\overline{B}_\epsilon(0)$ is a fundamental neighborhood of $\mathcal{A}$.
Define, $T_{0} \ := \ 0$ and $T_{n} \ := \ min\{ t(m)  :  t(m) \ge T_{n-1}+T\}$ for $n \ge 1$.
\textit{In other words, $\exists \{m(n)\}_{n \ge 0} \subseteq \mathbb{N}$
such that $T_{n} = t(m(n))$ for all $n \ge 0$}.

For the purpose of analyzing stability, we need to define the following rescaled trajectory:
$\hat{x}(t) := \frac{\overline{x}(t)}{r(n)}$, where $t \in [T_{n}, T_{n+1})$ and
$r(n) = \lVert \overline{x}(T_{n}) \rVert \vee 1$.
Also, let $\hat{x}(T_{n+1}^{-})\ := $  
$\underset{t \uparrow T_{n+1}}{\lim} \hat{x}(t)$.
We also define the rescaled martingale difference terms as follows:
$\hat{M}_{k+1} := \frac{M_{k+1}}{r(n)}$, $t(k) \in [T_{n}, T_{n+1})$.

Finally, we define the following piece-wise constant trajectories. Define $\hat{z}(t) := h_{r(n)}(\hat{x}(t(m)), y_m)$, where
$t \in [t(m), t(m+1))$ and $T_n \le t(m) < t(m+1) \le T_{n+1}$; also define $\overline{y}(t) := y_n$ for $t \in [t(n), t(n+1))$ and $n \ge 0$.

It can be readily verified that:
$E \left[ \lVert \hat{M}_{k+1} \rVert ^{2} | \mathcal{F}_{k} \right]$
$\le \ K \left( 1 + \lVert \hat{x}(t(k)) \rVert ^{2} \right)$.
The following lemma states that the Martingale noise convergences a.s. A proof can be found in  Borkar \& Meyn \cite{Borkar99}.
%
\begin{lemma}
\label{noiseconvergence}
$\underset{t \ge 0}{\sup}$ $E \lVert \hat{x}(t) \rVert ^{2}$ $< \infty$. Further,
the sequence $\hat{\zeta}_n$, $n \ge 0$, converges almost surely, 
where $\hat{\zeta}_{n}$ $:= \sum_{k=0}^{n-1} a(k) \hat{M}_{k+1}$ for all $n \ge 1$.
\end{lemma}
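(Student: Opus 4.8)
The plan is to recognize $\{\hat{\zeta}_n\}_{n \ge 1}$ as a square-integrable martingale with respect to the filtration $\{\mathcal{F}_n\}$ and then show it is bounded in $L^2$, so that Doob's martingale convergence theorem yields almost sure convergence. First I would verify the martingale property. The increment at stage $k$ is $a(k)\hat{M}_{k+1}$ with $\hat{M}_{k+1} = M_{k+1}/r(n)$ whenever $t(k) \in [T_n, T_{n+1})$. Since the step-sizes are deterministic, the subsequence $\{m(n)\}$ (and hence each $T_n = t(m(n))$) is deterministic, so $r(n) = \lVert \overline{x}(T_n) \rVert \vee 1 = \lVert x_{m(n)} \rVert \vee 1$ is $\mathcal{F}_{m(n)}$-measurable. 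As $t(k) \in [T_n, T_{n+1})$ forces $k \ge m(n)$, the factor $r(n)$ is $\mathcal{F}_k$-measurable, and therefore
\[
E\left[ a(k)\hat{M}_{k+1} \mid \mathcal{F}_k \right] = \frac{a(k)}{r(n)}\, E\left[ M_{k+1} \mid \mathcal{F}_k \right] = 0,
\]
using that $\{M_n\}$ is a martingale difference sequence. Thus $\{\hat{\zeta}_n\}$ is a martingale.

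Next I would bound the $L^2$ norms. Because martingale differences are orthogonal in $L^2$, one has
\[
E\lVert \hat{\zeta}_n \rVert^2 = \sum_{k=0}^{n-1} a(k)^2\, E\lVert \hat{M}_{k+1} \rVert^2.
\]
Applying the conditional bound recorded just after \eqref{unfold}, namely $E\left[ \lVert \hat{M}_{k+1} \rVert^2 \mid \mathcal{F}_k \right] \le K\left( 1 + \lVert \hat{x}(t(k)) \rVert^2 \right)$, and taking expectations gives $E\lVert \hat{M}_{k+1} \rVert^2 \le K\left( 1 + E\lVert \hat{x}(t(k)) \rVert^2 \right)$. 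Here Lemma~\ref{noiseprelim} is the key input: it provides a constant $C := \sup_{t \ge 0} E\lVert \hat{x}(t) \rVert^2 < \infty$ that is uniform in $k$ (and in $n$). Combining these estimates with assumption $(A3)$,
\[
\sup_{n} E\lVert \hat{\zeta}_n \rVert^2 \le K(1 + C)\sum_{k=0}^{\infty} a(k)^2 < \infty,
\]
so $\{\hat{\zeta}_n\}$ is an $L^2$-bounded martingale.

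Finally I would invoke the martingale convergence theorem: an $L^2$-bounded (equivalently, a square-integrable martingale with summable expected squared increments) martingale converges almost surely, which is exactly the claim. The step I expect to be the main obstacle is not the summability estimate, which is routine once Lemma~\ref{noiseprelim} is available, but rather the verification that $\{\hat{\zeta}_n\}$ genuinely is a martingale. The rescaling factor $r(n)$ is \emph{random} and varies from block to block, so one must confirm its $\mathcal{F}_k$-measurability; this is what licenses pulling $1/r(n)$ out of the conditional expectation and, through orthogonality of the (now genuine) martingale differences, what makes the clean variance decomposition valid. Once this measurability is pinned down via the deterministic nature of the block endpoints $T_n$, the remainder of the argument is standard.
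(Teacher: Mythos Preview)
Your proposal is correct and follows essentially the same route as the paper: both arguments reduce to showing $\sum_{k \ge 0} a(k)^2 E\lVert \hat{M}_{k+1}\rVert^2 < \infty$ via the conditional bound $E[\lVert \hat{M}_{k+1}\rVert^2 \mid \mathcal{F}_k] \le K(1+\lVert \hat{x}(t(k))\rVert^2)$, Lemma~\ref{noiseprelim}, and $(A3)$, and then invoke a standard martingale convergence criterion. Your explicit verification that the block-dependent scaling $r(n)$ is $\mathcal{F}_k$-measurable (hence that $\{\hat{\zeta}_n\}$ is a genuine martingale) is a detail the paper leaves implicit.
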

Let $x^n ([0,T])$ denote a solution up to time $T$
for $\dot{x}^{n}(\cdotp) = \hat{z}(T_{n} + \cdotp)$ with
$x^{n}(0) = \hat{x}(T_{n})$. Then,
$
 x^{n}(t)\ = \ \hat{x}(T_{n}) + \int_{0}^{t} \hat{z}(T_{n}+s) \,ds .
$
The following lemma states that the rescaled trajectories track the above defined solution trajectories.

\begin{lemma}
\label{difftozero}
 $\underset{n \to \infty}{\lim}$ 
 $\underset{t \in [T_{n}, T_{n}+T]}{\sup} \lVert
 x^n (t) - \hat{x}(t) \rVert = 0$ $a.s.$
\end{lemma}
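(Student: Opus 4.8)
The plan is to exploit a structural feature of this construction: the trajectory $x^n$ is the Euler polygon built from exactly the same frozen drift $\hat{z}$ that drives the rescaled recursion (\ref{unfold}), so the gap between $x^n$ and $\hat{x}$ collapses to a pure martingale-noise term and no Gronwall estimate is needed. Throughout I read $x^n(t)$ for $t \in [T_n, T_n+T]$ as the time-shifted trajectory $x^n(t - T_n)$, in keeping with the way it was defined.

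First I would establish an exact identity at the grid points. Unfolding (\ref{unfold}) from $T_n = t(m(n))$, for any $k$ with $m(n) \le k \le m(n+1)$ one has
\[
\hat{x}(t(k)) = \hat{x}(T_n) + \sum_{l=m(n)}^{k-1} a(l) h_{r(n)}(\hat{x}(t(l)), y_l) + \sum_{l=m(n)}^{k-1} a(l)\hat{M}_{l+1}.
\]
On the other hand, since $\hat{z}$ is constant and equals $h_{r(n)}(\hat{x}(t(l)), y_l)$ on each $[t(l), t(l+1)) \subset [T_n, T_{n+1})$, the explicit formula for $x^n$ gives $x^n(t(k) - T_n) = \hat{x}(T_n) + \sum_{l=m(n)}^{k-1} a(l) h_{r(n)}(\hat{x}(t(l)), y_l)$. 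Subtracting, the drift sums cancel term by term, so that
\[
\hat{x}(t(k)) - x^n(t(k) - T_n) = \sum_{l=m(n)}^{k-1} a(l)\hat{M}_{l+1} = \hat{\zeta}_k - \hat{\zeta}_{m(n)}.
\]
The cancellation is exact precisely because every increment $\hat{M}_{l+1}$ appearing here carries the same block scaling $r(n)$, and checking this bookkeeping is the one place that demands care.

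Next I would pass from the grid to the whole interval. On each $[t(k), t(k+1)] \subset [T_n, T_{n+1}]$ the map $\hat{x}(\cdot)$ is affine, being $1/r(n)$ times the linear interpolation $\overline{x}(\cdot)$, and $t \mapsto x^n(t - T_n)$ is likewise affine there since its derivative $\hat{z}$ is constant on that subinterval. Hence the difference is affine, and $\lVert \hat{x}(t) - x^n(t - T_n) \rVert$ is dominated on the subinterval by the larger of its two endpoint values. As $[T_n, T_n+T] \subseteq [T_n, T_{n+1}]$, this gives
\[
\sup_{t \in [T_n, T_n+T]} \lVert x^n(t - T_n) - \hat{x}(t) \rVert \le \max_{m(n) \le k \le m(n+1)} \lVert \hat{\zeta}_k - \hat{\zeta}_{m(n)} \rVert.
\]

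Finally I would invoke Lemma~\ref{noiseconvergence}: the sequence $\hat{\zeta}_n$ converges almost surely, hence is a.s. Cauchy; since $T_n \uparrow \infty$ forces $m(n) \to \infty$, the block oscillation $\max_{k \ge m(n)} \lVert \hat{\zeta}_k - \hat{\zeta}_{m(n)} \rVert$ tends to $0$ a.s. as $n \to \infty$, and combined with the preceding bound this yields the claim. The point worth emphasizing is that the substantive approximation of this paper, namely replacing the frozen drift $\hat{z}$ by a vector field governed by the ergodic occupation measures, is deferred to later lemmas; here $x^n$ is by definition the frozen-drift polygon, so the only genuine obstacle is the martingale convergence of Lemma~\ref{noiseconvergence} together with the (mild) affine grid-to-continuum argument.
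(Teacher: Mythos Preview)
Your proposal is correct and takes essentially the same approach as the paper: both unfold the rescaled recursion and the integral defining $x^n$ so that the drift sums cancel exactly and only the martingale tail $\hat{\zeta}_k - \hat{\zeta}_{m(n)}$ survives, then conclude from Lemma~\ref{noiseconvergence}. The paper differs only cosmetically---it expands $\hat{x}(t)$ at a generic $t$ via the linear-interpolation formula (obtaining the bound $\lVert \hat{\zeta}_{m(n)+k} - \hat{\zeta}_{m(n)} \rVert + \lVert \hat{\zeta}_{m(n)+k+1} - \hat{\zeta}_{m(n)+k} \rVert$) rather than using your affinity-between-grid-points shortcut, and it is marginally more careful at the right endpoint, writing $\hat{x}(T_{n+1}^-)$ in place of $\hat{x}(T_{n+1})$ on the last subinterval where the rescaling factor changes.
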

\begin{proof} 
  Let $t \in \left[ t(m(n) + k), t(m(n) + k+1) \right)$ such that $T_n \le t(m(n)+k) < t(m(n)+k+1) \le T_{n+1}$, where $n \ge 0$.
  First we prove the lemma for $t(m(n)+k+1) < T_{n+1}$. Consider the following:
  \begin{eqnarray*}
   \hat{x}(t) & = & \left( \frac{t(m(n)+k+1) - t}{a(m(n)+k)} \right) \hat{x}(t(m(n)+k))
    \\ & + &
   \left( \frac{t - t(m(n)+k)}{a(m(n)+k)} \right) \hat{x}(t(m(n)+k+1)).
  \end{eqnarray*}
 Substituting for $\hat{x}(t(m(n)+k+1))$ in the above equation we get:
 \begin{multline}\nonumber
 \hat{x}(t)  =  \left( \frac{t(m(n)+k+1) - t}{a(m(n)+k)} \right) \hat{x}(t(m(n)+k))
   \\  +  \left( \frac{t - t(m(n)+k)}{a(m(n)+k)} \right)
   ( \hat{x}(t(m(n)+k)) + a(m(n)+k) \\ (
   h_{r(n)}(\hat{x}(t(m(n)+k)), y_{m(n)+k}) + \hat{M}_{m(n)+k+1} )),
 \end{multline}
 hence,
  \begin{multline}\nonumber
 \hat{x}(t) =  \hat{x}(t(m(n)+k)) + 
   \left( t - t(m(n)+k) \right) \\ \left( h_{r(n)}(\hat{x}(t(m(n)+k)), y_{m(n)+k}) + \hat{M}_{m(n)+k+1} \right).
  \end{multline}
 Unfolding $\hat{x}(t(m(n)+k))$ we get,
  \begin{multline}\label{eq:lemma41}
   \hat{x}(t) = \hat{x}(T_{n})  +  \sum_{l=0}^{k-1} a(m(n)+l)
  (  h_{r(n)}(\hat{x}(t(m(n)+l)), y_{m(n)+l}) \\ + \hat{M}_{m(n)+l+1}) +  
 \left(t-t(m(n)+k) \right) \\
  \left( h_{r(n)}(\hat{x}(t(m(n)+k)), y_{m(n)+k}) + \hat{M}_{m(n)+k+1}  \right). 
  \end{multline}
  Recall that
  \begin{equation}\nonumber
  x^{n}(t) = \hat{x}(T_{n}) + \int_0^t \hat{z}(T_{n}+s) \ \,ds .
  \end{equation}
  Splitting the above integral into several sub-integrals, we get
  \begin{equation}\nonumber
   x^n (t) =  \ \hat{x}(T_{n}) \ +\ \sum_{l=0}^{k-1}
  \int_{t(m(n)+l)}^{t(m(n)+l+1)} \hat{z}(s) \,ds
  + \int_{t(m(n)+k)}^{t} \hat{z}(s) \,ds .
  \end{equation}
  Thus,
 \begin{multline}\label{eq:lemma42}
  x^n(t) = \hat{x}(T_{n})  + \\ \sum_{l=0}^{k-1} a(m(n)+l)
 h_{r(n)}(\hat{x}(t(m(n)+l)), y_{m(n)+l}) + \\
 \left(t-t(m(n)+k) \right) h_{r(n)}(\hat{x}(t(m(n)+k)), y_{m(n)+k}). 
 \end{multline}
 From (\ref{eq:lemma41}) and (\ref{eq:lemma42}), 
  we get
  \begin{multline}\nonumber
  \lVert x^{n}(t) - \hat{x}(t) \rVert \le 
   \left\lVert \sum_{l=0}^{k-1} a(m(n)+l) \hat{M}_{m(n)+l+1} \right\rVert
  + \\ \left\lVert \left(t-t(m(n)+k) \right) \hat{M}_{m(n)+k+1} \ \right\rVert \text{ and}
  \end{multline}
  \begin{equation}\nonumber
 \lVert x^n (t) - \hat{x}(t) \rVert \le \lVert 
 \hat{\zeta}_{m(n)+k} - \hat{\zeta}_{m(n)} \rVert  + 
 \lVert \hat{\zeta}_{m(n)+k+1} - \hat{\zeta}_{m(n)+k} \rVert .
  \end{equation}
 If $t(m(n)+k+1) = T_{n+1}$ then replace $\hat{x}(t(m(n)+k+1))$ with
 $\hat{x}(T^-_{n+1})$ in the above set of equations and the arguments remain identical.
\end{proof}
Recall that $T = T(\delta_2 - \delta_1)+1$. We show that
$\{ x^n ([0, T])\ |\ n \ge 0 \}$ and $\{ \hat{x} ([T_n, T_n + T])\ |\ n \ge 0 \}$ are relatively compact
subsets of $C([0, T], \mathbb{R}^d)$, endowed with the sup-norm, $\lVert \cdotp \rVert _\infty$. To do this, we merely show that $\{ x^n ([0, T])\ |\ n \ge 0 \}$ is relatively compact. The relative compactness of $\{ \hat{x} ([T_n, T_n + T])\ |\ n \ge 0 \}$ follows from Lemma~\ref{difftozero}. In what follows, we show that $\{ x^n ([0, T])\ |\ n \ge 0 \}$ is equicontinuous and point-wise bounded. Relative compactness of $\{ x^n ([0, T])\ |\ n \ge 0 \}$ then follows from the \textit{Arzela-Ascoli Theorem}.

Recall that $\underset{t \ge 0}{\sup} \ E \lVert \hat{x}(t) \rVert ^2 < \infty$ \textit{a.s.} and
$\lVert \hat{z}(t) \rVert \le K(1 + \lVert \hat{x}([t])\rVert) $, where
$[t]:= \max\{t(m)\ |\ t(m) \le t \}$. Hence
$\underset{t \ge 0}{\sup} \ \lVert \hat{z}(t) \rVert < \infty$ \textit{a.s.}
For $\delta > 0$, we have:
\[
\lVert x^n (t + \delta) - x^n(t) \rVert \le \int _t ^{t+ \delta} \lVert \hat{z}(s) \rVert \,ds \le M \delta,
\]
where $M$ is a, possibly sample path dependent, constant such that
$\underset{t \ge 0}{\sup} \ \lVert \hat{z}(t) \rVert \le M$. In other words,
$\{ x^n ([0, T])\ |\ n \ge 0 \}$ is equicontinuous. Now, we need to show that the point-wise boundedness property is satisfied.
Since $\lVert x^n (0) \rVert = \lVert \hat{x}(T_n) \rVert \le 1$, it follows from Gronwall's inequality
that $\underset{n \ge 0}{\sup} \ \lVert x^n \rVert_\infty < \infty$ a.s., where
$\lVert x^n \rVert_\infty = \underset{t \in [0, T]}{\sup} \lVert x^n(t) \rVert$. In other words, $\{ x^n ([0, T])\ |\ n \ge 0 \}$ is point-wise bounded.

We are interested in showing that $\sup \limits_{n \ge 0} \lVert x_n \rVert < \infty$ a.s. We present a proof by contradiction. Suppose the iterates are unstable, then with positive probability we have
 $\underset{n \ge 0}{\sup} \ r(n) = \infty$ and there exists
$\{l\} \subseteq \{n\}$ such
that $r(l) \uparrow \infty$. 
Note that this sub-sequence may be sample path dependent.
The following lemma shows that 
$\{ \hat{x}([T_l, T_l +T]) \ |\ \{l\} \subseteq \{n\}\ \&\ r(l) \uparrow \infty\}$ tracks a solution to $\dot{x}(t) \in H(x(t))$ as $r(l) \uparrow \infty$.
\begin{lemma} \label{limitchar}
 Let $\{l\} \subseteq \{n\}$ such that $r(l) \uparrow \infty$. Any limit of \\
 $\{\hat{x}([T_l, T_l +T]) \ |\ \{l\} \subseteq \{n\}\ \&\ r(l) \uparrow \infty \}$ in 
 $C([0, T], \mathbb{R}^d)$ is of the form $x(t) = x(0) + \int_{0}^t\ z(s) \,ds$, where
 $x(0) \in \overline{B}_1(0)$ and $z: [0,T] \to \mathbb{R}^d$ is a measurable function
 such that $z(t) \in H(x(t))$, $t \in [0,T]$.
\end{lemma}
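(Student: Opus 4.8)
The plan is to show that every such limit point is an absolutely continuous solution of $\dot x(t)\in H(x(t))$ started inside the unit ball. First I would realize the limit along a subsequence: by the relative compactness established above, $\{\hat x([T_l,T_l+T])\}$ has a convergent subsequence in $C([0,T],\mathbb{R}^d)$; relabel it and call its limit $x(\cdot)$. Lemma~\ref{difftozero} gives $\sup_{t\in[0,T]}\lVert x^l(t)-\hat x(T_l+t)\rVert\to 0$ a.s., so the shifted solutions $x^l(\cdot)$ converge to the same $x(\cdot)$. For the initial condition, recall $\hat x(T_l)=\overline x(T_l)/r(l)$ with $r(l)=\lVert\overline x(T_l)\rVert\vee1$; since $r(l)\uparrow\infty$ we have $\lVert\hat x(T_l)\rVert=1$ for all large $l$, whence $x(0)=\lim_l\hat x(T_l)$ satisfies $\lVert x(0)\rVert\le 1$, i.e. $x(0)\in\overline B_1(0)$. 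Writing $z^l(s):=\hat z(T_l+s)$, the defining relation of $x^l$ reads $x^l(t)=\hat x(T_l)+\int_0^t z^l(s)\,ds$.

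Next I would extract the integrand $z$. Because $\sup_{t\ge0}\lVert\hat z(t)\rVert\le M$ a.s. (the bound used earlier for equicontinuity), the family $\{z^l\}$ is bounded in $L^2([0,T],\mathbb{R}^d)$, so after passing to a further subsequence $z^l\rightharpoonup z$ weakly in $L^2$. Testing the integral identity against $\mathbf 1_{[0,t]}$ and letting $l\to\infty$ — the left side converges, $\hat x(T_l)\to x(0)$, and $\int_0^t z^l\to\int_0^t z$ by weak convergence — yields $x(t)=x(0)+\int_0^t z(s)\,ds$; in particular $z$ is measurable and is the a.e. derivative of $x$.

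The crux is to verify $z(s)\in H(x(s))$ for a.e. $s$. Here I would first record the pointwise fact that $d\bigl(z^l(s),H(x(s))\bigr)\to0$ for every $s$: indeed $z^l(s)=h_{r(l)}(\hat x([T_l+s]),\overline y(T_l+s))$, where $\hat x([T_l+s])\to x(s)$ (the interpolation is equicontinuous and the step-sizes vanish, so replacing $T_l+s$ by the grid point $[T_l+s]$ is harmless), $r(l)\uparrow\infty$, and $\overline y(T_l+s)\in S$; the consequence of Lemma~\ref{hcross} then gives the claim. To convert this approximate membership together with the weak limit into actual membership I would use support functions: for fixed $p\in\mathbb{R}^d$ and any measurable $A\subseteq[0,T]$, projecting $z^l(s)$ onto the compact convex set $H(x(s))$ gives $\langle p,z^l(s)\rangle\le\sigma_{H(x(s))}(p)+\lVert p\rVert\,d(z^l(s),H(x(s)))$, where $\sigma_{C}(p):=\sup_{v\in C}\langle p,v\rangle$. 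Integrating over $A$, letting $l\to\infty$ (weak convergence on the left, dominated convergence for the vanishing error term, using the uniform bound on $z^l$ and $\sup_{v\in H(x(s))}\lVert v\rVert\le K(1+\lVert x(s)\rVert)$ from (\ref{Hpwb})), and using that $A$ is arbitrary yields $\langle p,z(s)\rangle\le\sigma_{H(x(s))}(p)$ for a.e. $s$. Running this over a countable dense set of $p$ and invoking continuity of both sides in $p$ together with the convexity and closedness of $H(x(s))$ shows $z(s)\in H(x(s))$ a.e., as required.

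The main obstacle is precisely this last step: the pointwise estimate $d(z^l(s),H(x(s)))\to0$ is compatible only with weak (not strong) convergence of the velocities, so one cannot pass to the limit naively; the support-function/convexity argument above is what reconciles the two. The remaining points — measurability of $s\mapsto\sigma_{H(x(s))}(p)$ and of $s\mapsto d(z^l(s),H(x(s)))$, which follow from continuity of $x$, upper semicontinuity of the Marchaud map $H$ (Lemma~\ref{Hismarchaud}), and measurability of the piecewise-constant $z^l$ — are routine and I would treat them briefly.
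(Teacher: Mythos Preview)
Your proof is correct and follows the paper's outline up to and including the extraction of the weak $L^2$ limit $z$ and the integral representation $x(t)=x(0)+\int_0^t z(s)\,ds$. The divergence is in the final step, showing $z(s)\in H(x(s))$ a.e.

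The paper handles this via the \emph{Banach--Saks theorem}: from the weak limit $\hat z(T_l+\cdot)\rightharpoonup z$ it extracts a subsequence whose Ces\`aro means converge strongly in $L^2$, hence (along a further subsequence) a.e.\ on $[0,T]$. At a point $t_0$ of a.e.\ convergence one has $\frac{1}{m(N)}\sum_{l=1}^{m(N)}\hat z(T_{k(l)}+t_0)\to z(t_0)$; since each summand is eventually within any prescribed distance of the convex compact set $H(x(t_0))$ (by the consequence of Lemma~\ref{hcross}), so is the average, and hence the limit $z(t_0)$ lies in $H(x(t_0))$.

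You instead bypass Banach--Saks entirely and use the \emph{support-function characterization} of closed convex sets: the pointwise bound $d(z^l(s),H(x(s)))\to 0$ is converted into the half-space inequality $\langle p,z^l(s)\rangle\le\sigma_{H(x(s))}(p)+\lVert p\rVert\,d(z^l(s),H(x(s)))$, which survives integration against $\mathbf 1_A$ and passage to the weak limit (left side) together with dominated convergence (error term), yielding $\langle p,z(s)\rangle\le\sigma_{H(x(s))}(p)$ a.e.; a countable dense set of directions $p$ then pins $z(s)\in H(x(s))$.

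Both arguments are standard and exploit exactly the same two ingredients --- weak $L^2$ compactness of the velocities and convexity of $H(x)$ --- but package them differently. Your route is arguably more elementary (it uses only Hahn--Banach/separating hyperplanes and DCT, not the uniform convexity of $L^2$ underlying Banach--Saks) and makes the role of convexity completely transparent via $\sigma_{H(x(s))}$. The paper's route is the one traditionally used in the ODE-method literature (cf.\ Borkar \cite{BorkarBook}) and has the minor advantage that once a.e.\ convergence of the Ces\`aro means is in hand, the conclusion at each fixed $t_0$ is a one-line closure argument with no measurability bookkeeping. Your remark that the required measurability of $s\mapsto\sigma_{H(x(s))}(p)$ follows from continuity of $x$ and the Marchaud property of $H$ is correct (indeed $x\mapsto\sigma_{H(x)}(p)$ is upper semicontinuous), and your treatment of the grid-point replacement $\hat x([T_l+s])\to x(s)$ matches the auxiliary claim the paper proves at the end of its argument.
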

\begin{proof}
 Define the notation $[t] := \max \{t(m)\ |\ t(m) \le t \}$. For a fixed $n \ge 0$ 
 and $t_0 \in [T_n, T_{n+1})$, we have
 $\hat{z}(t_0) = h_{r(n)}(\hat{x}([t_0]), \overline{y}([t_0]))$ and 
 $\lVert \hat{z}(t_0) \rVert \le K(1+ \lVert \hat{x}([t_0]) \rVert)$. It follows from
 Lemma~\ref{noiseconvergence} that $\lVert \hat{z}(t) \rVert < \infty$ \textit{a.s.} Recall that
 $\{ \hat{x}(T_l + \cdotp)\ |\ \{l\} \subseteq \{n\} \}$ is relatively compact in $C([0, T], \mathbb{R}^d)$.
 Without loss of generality we may assume that: \textbf{(a)}
 $\hat{x}(T_l + \cdotp) \to x(\cdotp) \text{ in } C([0, T], \mathbb{R}^d)$, for some
  $x(\cdotp) \in C([0, T], \mathbb{R}^d)$; \textbf{(b)}
  $\hat{z}(T_l + \cdotp) \to z(\cdotp) \text{ weakly in } L^2([0, T], \mathbb{R}^d)$, for some 
  $z(\cdotp) \in L^2([0, T], \mathbb{R}^d)$.

From Lemma~\ref{difftozero} we get $x^l(\cdotp) \to x(\cdotp)$ in $C([0, T], \mathbb{R}^d)$.
Letting $r(l) \to \infty$ in
\begin{align*}
 x^l(t) &= x^l(0) + \int_0 ^t \hat{z}(T_l + s) \,ds, \text{ we get }
 \\
 x(t) &= x(0) = \int_0 ^t z(s) ds.
\end{align*}
Since $\lVert x^l(0) \rVert = \lVert \hat{x}(T_l) \rVert \le 1$ $\forall l$, we get 
$\lVert x(0) \rVert \le 1$. $\hat{z}(T_l + \cdotp) \to z(\cdotp)$ weakly in $L^2([0,T], \mathbb{R}^d)$, 
hence it follows from \textit{Banach-Saks Theorem} that
\begin{multline}\nonumber \exists \ \{k(l)\} \subseteq \{l\} \text{ such that }
 \frac{1}{N} \sum_{l=1}^{N} \hat{z}(T_{k(l)} + \cdotp) \to z(\cdotp) \\ \text{ strongly in }
 L^2([0,T], \mathbb{R}^d).
\end{multline}
Further,
\begin{multline} \label{limitchar1}
\exists \ \{m(N)\} \subseteq \{N\} \text{ such that } \\
 \frac{1}{m(N)} \sum_{l=1}^{m(N)} \hat{z}(T_{k(l)} + \cdotp) \to z(\cdotp) \text{ $a.e.$ on $[0,T]$}.
\end{multline}
Fix $t_0 \in [0,T]$ such that (\ref{limitchar1}) holds, \textit{i.e.},
\begin{align} \label{limitchar2}
\underset{m(N) \to \infty}{\lim} \ \frac{1}{m(N)} \sum_{l=1}^{m(N)} \hat{z}(T_{k(l)} + t_0) = z(t_0).
\end{align}
We know that $\hat{z}(T_{k(l)} + t_0) = h_{r(k(l))} 
(\hat{x}([T_{k(l)} + t_0]), \overline{y}([T_{k(l)} + t_0]))$. Note that
$\overline{y}([T_{k(l)} + t_0]) = \overline{y}(T_{k(l)} + t_0)$. 

We claim the following:
For any $\epsilon > 0$ there exists $N$ such that for all $n \ge N$
$\lVert \hat{x}(t(m)) - \hat{x}(t(m+1)) \rVert < \epsilon$, where $T_n \le t(m)< t(m+1) < T_{n+1}$.
If $t(m+1) = T_{n+1}$ then we claim that $\lVert \hat{x}(t(m)) - \hat{x}(T_{n+1}^-) \rVert < \epsilon$.
We shall prove this later, for now we assume it is true and proceed.

Since $\hat{x}(T_{k(l)} + t_0) \to x(t_0)$ it follows from the above claim that \\
$\hat{x}([T_{k(l)} + t_0]) \to x(t_0)$. Since $r(k(l)) \uparrow \infty$ it follows from
Lemma~\ref{hcross} that 
\begin{align*}
& \underset{r(k(l)) \uparrow \infty}{\lim} d \left(h_{r(k(l))} (\hat{x}([T_{k(l)} + t_0]), \overline{y}([T_{k(l)} + t_0])), H(x(t_0)) \right) 
\\ & = 0
 \text{  i.e.,} \ \underset{r(k(l)) \uparrow \infty}{\lim} d\left(\hat{z}(T_{k(l)} + t_0), H(x(t_0))\right) = 0.
\end{align*}
Further, since
$H(x(t_0))$ is convex and compact, it follows from equation (\ref{limitchar2}) that
$z(t_0) \in H(x(t_0))$. On the measure zero subset of [0,T] where (\ref{limitchar1}) does not hold, 
the value of $z(\cdotp)$ can be modified to ensure that $z(t) \in H(x(t))$ for all $t \in [0,T]$.

It is left to prove the claim that was made earlier. We first show that given
any $\epsilon > 0$ there exists $N$ such that $n \ge N$ implies that
$\lVert \hat{x}(t(m)) - \hat{x}(t(m+1)) \rVert < \epsilon$, where $T_n \le t(m)< t(m+1) < T_{n+1}$.
We know that 

{\footnotesize
\[ \hat{x}(t(m+1)) = \hat{x}(t(m)) + a(n) \left( h_{r(n)}(\hat{x}(t(m)), \overline{y}(t(m)))
 + \hat{M}_{n+1 } \right).
\]}

Hence,{\footnotesize
\begin{equation}\nonumber 
\lVert \hat{x}(t(m)) - \hat{x}(t(m+1)) \rVert \le
 a(n) \lVert h_{r(n)}(\hat{x}(t(m)), \overline{y}(t(m))) \rVert + \lVert \zeta_{n+1} - \zeta_{n} \rVert.
\end{equation}}

From (\ref{Hpwb}), the above inequality becomes{\footnotesize
\[\lVert \hat{x}(t(m)) - \hat{x}(t(m+1)) \rVert \le 
 a(n) K(1+ \lVert \hat{x}(t(m)) \rVert) + \lVert \zeta_{n+1} - \zeta_{n} \rVert.
\]}
It follows from Lemma~\ref{noiseconvergence} that
$a(n) K(1+ \lVert \hat{x}(t(m)) \rVert) \to 0$ and\\ $\lVert \zeta_{n+1} - \zeta_{n} \rVert \to 0$
respectively in the `almost sure' sense. In other words, there exists $N$ (possibly sample path dependent)
such that the claim holds. 
The second part of the unproven claim considers the situation when $t(m+1) = T_{n+1}$,
the proof of which follows in a similar manner.
\end{proof}
The following is an immediate corollary to the above lemma.
 \begin{corollary} \label{R_0}
  $\exists \ 1 < R_0 < \infty$ 
such that $\forall \ r(l) > R_0$
$\lVert \hat{x}(T_l + \cdotp) - x(\cdotp) \rVert < \delta_3 - \delta_2$,
where $\{ l \} \subseteq \mathbb{N}$ and $x(\cdotp)$ is a solution (up to time $T$) of $\dot{x}(t) \in H(x(t))$
such that $\lVert x(0) \rVert \le 1$. The form of $x(\cdotp)$ is given by
Lemma~\ref{limitchar}.
 \end{corollary}
\begin{proof}
Assume to the contrary that $\exists \ r(l) \uparrow \infty$ such that
$\hat{x}(T_l + \cdotp)$ is at least $\delta_3 - \delta_2$ away from any solution
to the $DI$. It follows from Lemma~\ref{limitchar} that
there exists a subsequence of 
$\{ \hat{x}(T_l + t), 0 \le t \le T \ :\ l \subseteq \mathbb{N} \}$
guaranteed to converge, in $C([0,T], \mathbb{R}^d)$,
to a solution of $\dot{x}(t) \in H (x(t))$ such that $\lVert x(0) \rVert \le 1$. 
This is a contradiction.
 \end{proof}
 \textit{It is worth noting that $R_0$ may be sample path dependent.
Since $T = T(\delta_2 - \delta_1) +1$ we get $\lVert \hat{x}([T_l + T]) \rVert < \delta_3$
for all $T_l$ such that $\lVert \overline{x}(T_l)\rVert (=r(l)) > R_0$.} We are now ready to state our stability theorem.
\begin{theorem}[The Stability Theorem] \label{stability}
 Under assumptions $(A1)-(A3), \ (S1) \ \& \ (S2)$, $\underset{n \ge 0}{\sup}\ \lVert x_n \rVert < \infty$ \textit{a.s.}
\end{theorem}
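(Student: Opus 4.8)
The plan is to prove stability by controlling the scaling factors $r(n) = \lVert \overline{x}(T_n) \rVert \vee 1$. As noted just before the statement, Gronwall's inequality shows $\sup_k \lVert x_k \rVert < \infty$ a.s.\ iff $\sup_n \lVert \overline{x}(T_n) \rVert < \infty$ a.s., so it suffices to prove $\sup_n r(n) < \infty$ a.s. I would extract two facts and combine them. The first is a one-step growth bound: the equicontinuity/point-wise boundedness argument established just before Lemma~\ref{limitchar} (via Gronwall) gives a sample-path-dependent $C_1 < \infty$ with $\sup_n \sup_{t \in [T_n,T_{n+1})} \lVert \hat{x}(t) \rVert \le C_1$ a.s. Since $\hat{x}(T_{n+1}^-) = \overline{x}(T_{n+1})/r(n)$, this yields $\lVert \overline{x}(T_{n+1}) \rVert \le C_1 r(n)$, and because $r(n) \ge 1$ we get $r(n+1) \le \hat{C}\, r(n)$ with $\hat{C} := C_1 \vee 1$. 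Thus $r$ cannot jump to infinity in a single step.

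The crux is the second fact: an eventual contraction above a threshold, namely that there exist an a.s.\ finite (random) $N$ and a constant $C_2$ such that $n \ge N$ and $r(n) > C_2$ imply $r(n+1) < \delta_4\, r(n)$. I would prove this by contradiction. If it failed for every threshold, there would be $n_j \to \infty$ with $r(n_j) \to \infty$ and $r(n_j+1) \ge \delta_4\, r(n_j)$; since $r(n_j+1) = \lVert \overline{x}(T_{n_j+1}) \rVert \vee 1$ and $1/r(n_j) \to 0 < \delta_4$, this forces $\lVert \hat{x}(T_{n_j+1}^-) \rVert \ge \delta_4$ for large $j$. Using the relative compactness of $\{\hat{x}([T_{n_j}, T_{n_j}+T])\}$ in $C([0,T],\mathbb{R}^d)$, pass to a sub-subsequence with $\hat{x}(T_{n_j}+\cdot) \to x(\cdot)$. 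Because $r(n_j) \uparrow \infty$, Lemma~\ref{limitchar} gives that $x(\cdot)$ solves $\dot{x}(t) \in H(x(t))$ with $\lVert x(0) \rVert \le 1$; by $(S2)$ and the choice $T = T(\delta_2 - \delta_1)$ we then have $\lVert x(T) \rVert < \delta_2$, so $\lVert \hat{x}(T_{n_j}+T) \rVert < \delta_3$ for large $j$.

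The remaining subtlety is that the true block length $T_{n_j+1} - T_{n_j}$ slightly exceeds the fixed horizon $T$. Here the overshoot is at most one step size, $T_{n_j+1} - (T_{n_j}+T) \le a(m(n_j+1)-1)$, which tends to $0$ since $\sum_n a(n)^2 < \infty$ forces $a(n) \to 0$. As $\hat{z}$ is a.s.\ bounded (Lemma~\ref{noiseprelim} and $(A1)$), $\lVert \hat{x}(T_{n_j+1}^-) - \hat{x}(T_{n_j}+T) \rVert \to 0$, so $\lVert \hat{x}(T_{n_j+1}^-) \rVert < \delta_4$ for large $j$, contradicting $\lVert \hat{x}(T_{n_j+1}^-) \rVert \ge \delta_4$. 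This establishes the eventual contraction. Finally I would combine the two facts: for $n \ge N$, if $r(n) \le C_2$ then $r(n+1) \le \hat{C} C_2$, while if $r(n) > C_2$ then $r(n+1) < \delta_4\, r(n) < r(n)$; a one-line induction gives $r(n) \le \max(r(N), \hat{C} C_2)$ for all $n \ge N$. Since only finitely many indices lie below $N$, $\sup_n r(n) < \infty$ a.s., proving the theorem.

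The main obstacle I anticipate is the quantifier management in the second step: converting the statement ``every subsequential limit taken along $r(n) \uparrow \infty$ solves the $DI$ and is driven into $B_{\delta_2}(0)$ by time $T$'' into a \emph{uniform} eventual contraction valid for all large $n$ with $r(n)$ above a fixed threshold, while simultaneously reconciling the fixed horizon $T$ with the genuine (slightly longer) block length — a reconciliation that hinges precisely on $a(n) \to 0$ together with the a.s.\ boundedness of $\hat{z}$.
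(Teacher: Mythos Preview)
Your proposal is correct and follows the same Borkar--Meyn rescaling strategy as the paper: both use Lemma~\ref{limitchar} together with $(S2)$ to obtain the key contraction $\lVert \hat{x}(T_{l+1}^-)\rVert < \delta_4$ (hence $r(l+1) < \delta_4\, r(l)$) along indices where $r(l)$ is large, and both reconcile the block length $T_{l+1}-T_l$ with the fixed horizon $T$ via $a(n)\to 0$ and the a.s.\ boundedness of $\hat z$. The only difference is in the closing step: you combine the contraction with the one-step growth bound $r(n+1)\le \hat C\, r(n)$ through a direct induction giving $r(n)\le \max(r(N),\hat C C_2)$, whereas the paper phrases the same conclusion as a ``last jump'' argument (the unrescaled trajectory would have to exit the unit ball by increments of size at least $r(l)-1$ within time $T+1$, contradicting Gronwall). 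Your formulation is arguably cleaner in its quantifier management, but the substance is identical.
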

\begin{proof}
 Define $\mathcal{B} := \{\hat{\zeta}_n \text{ converges}\}$. It is enough to show that
 $\underset{n \ge 0}{\sup}\ \lVert x_n \rVert < \infty$ on $\mathcal{B}$. 
Suppose not, $\exists$ $\mathcal{D} \subseteq \mathcal{B}$ such that $P(\mathcal{D}) > 0$
and $\underset{n \ge 0}{\sup} \lVert x_n \rVert = \infty$ on $\mathcal{D}$. 
In other words, $\exists \{l\} \subseteq \{n\}$ such that $r(l) \uparrow \infty$.
Recall the notation $[t]=\max \{t(k) \mid t(k) \le t\}$ and the sequence $\{m(l)\}_{l \ge 0}$ such that 
$T_l = t(m(l))$.
There exists $N > 0$
such that the following hold simultaneously.
\begin{enumerate}
\item If $m(l) \ge N$ then $r(l) > R_0$ (see Corollary~\ref{R_0}).
 \item If $m(l) \ge N$ then $\lVert \hat{x}(T_l + T) \rVert < \delta_3$ (since $T=T(\delta_2 - \delta_1) +1$
 and from Corollary~\ref{R_0}).
 \item If $n>m \ge N$ then $\lVert \hat{\zeta}_n - \hat{\zeta}_m \rVert \le \epsilon_m$ ($\epsilon_m \to 0$ as $N \to \infty$).
 \item If $n \ge N$ then $a(n) < \frac{\delta_4 - \delta_3}{[K(1+C)+\epsilon_m]}$. Here $C$ is a sample path
 dependent constant such that $\underset{t \ge 0}{\sup} \lVert \hat{x}(t) \rVert \le C$.
\end{enumerate}
Given $T_{l}$ there exists $k$ such that $T_{l+1} = t(m(l)+k+1)$. 
Consider the following:
\begin{multline}\nonumber
 \hat{x}(T_{l+1}^-) = \hat{x}(t(m(l)+k)) + a(m(l)+k) \\ \left( h_{r(l)}(\hat{x}(t(m(l) + k)), y_{m(l)+k}) + \hat{M}_{m(l)+k+1} \right).
\end{multline}
Taking norms on both sides we get the following inequality,
\begin{multline}\nonumber
 \lVert \hat{x}(T_{l+1}^-) \rVert \le \lVert \hat{x}(t(m(l)+k)) \rVert + a(m(l)+k) \\ 
 \left( K(1 + \lVert \hat{x}(t(m(l) + k)) \rVert) + \lVert \hat{M}_{m(l)+k+1} \rVert \right).
\end{multline}
From the observations made earlier we get,
\begin{multline}\nonumber
 \lVert \hat{x}(T_{l+1}^-) \rVert \le \delta_3 + \frac{\delta_4 - \delta_3}{[K(1+C)+\epsilon_m]}
 \left( K(1+C) + \epsilon_m \right).
\end{multline}
Since $\lVert \overline{x}(T_l) \rVert = 1$ and $\lVert \hat{x}(T_{l+1}^-) \rVert < \delta_4$, we get
\begin{equation}
 \label{ratio}
 \frac{\lVert \overline{x}(T_{l+1}^-) \rVert}{\lVert \overline{x}(T_l)\rVert} = 
 \frac{\lVert \hat{x}(T_{l+1}^-) \rVert}{\lVert \hat{x}(T_l) \rVert} < \delta_4 < 1.
\end{equation}
Since $\lVert \overline{x}(T_l) \rVert \uparrow \infty$ and $\lVert \overline{x}(T_{l+1}^-) \rVert < \delta_4$.
It follows that the iterates have to make larger and larger jumps within a single interval of length $T$.
For all $m(l) > N$ the trajectory falls exponentially till it enters the ball
of radius $R_0$. This implies that $x(T_l) \in \overline{B}_{R_0}(0)$
when $r(l-1) < r(l)$. Further, the trajectory made a jump of at least $r(l) - R_0$ within time interval $T$.
This violates Gronwalls inequality.
\end{proof}
\section{Convergence analysis} \label{convergence}
In the previous section, we showed that the iterates given by (\ref{eq:sre}) are bounded almost surely (stable), provided the conditions in Section~\ref{assumptions} are satisfied. In this section, we show that our stability assumptions can be combined with additional assumptions, based on those found in Borkar \cite{BorkarM}, to provide a complete analysis of (\ref{eq:sre}). In other words, the main result, Theorem~\ref{convt}, of this section states the following: Recursion (\ref{eq:sre}) is bounded almost surely and converges to an internally chain transitive invariant set associated with a $DI$ that is defined in terms of the ergodic occupation measures associated with the `Markov process'. Below, we list the additional assumptions involved in our analysis.
 \\
 \textbf{(B1)} $\{y_n\}_{n \ge 0}$ is an $S$-valued Markov process with two associated 
 control processes: the iterate sequence
 $\{x_n\}_{n \ge 0}$ and another random process $\{z_n\}_{n \ge 0}$ taking values in a compact
 metric space $U$. Thus, for $n \ge 0$,
 \[
  P \left( y_{n+1} \in A \ |\ y_m, z_m, x_m, \ m \le n \right) = \int _A p(dy | y_n,z_n,x_n),
 \]
with $A$ Borel in $S$. The map
\[
 (y,z,x) \in S \times U \times \mathbb{R}^d \to p(dw | y,z,x) \in \mathcal{P}(S)
\]
is continuous, and further it is uniformly continuous in the $x$ variable on compacts with respect to the other
variables. $\mathcal{P}(S)$ is used to denote the space of probability measures on $S$.

Let $\varphi: S \to \varphi (S)$ with $\varphi (y, dz) \in \mathcal{P}(U)$ for each $y \in S$, be a measurable map. 
Suppose the Markov process has a (possibly non-unique) invariant probability measure
$\eta_{x, \varphi} (dy) \in \mathcal{P}(S)$,
we can define the corresponding \textit{ergodic occupation measure}
\begin{equation}\label{ergoc}
 \Psi_{x, \varphi} (dy, dz) := \eta_{x, \varphi}(dy) \varphi(dy, dz) \ \in \mathcal{P}(S \times U).
\end{equation}
Let $D(x)$ be the set of all such ergodic occupation measures for a prescribed $x$. It can
be shown that $D(x)$ is closed and convex for each $x \in \mathbb{R}^d$. 
Further, the map $x \mapsto D(x)$ is upper-semicontinuous.
For a proof of the aforementioned results
the reader is referred to \textit{Chapter 6.2} of \cite{BorkarBook}.
\\
\textbf{(B2)} $D(x)$ is compact.

Let us define a $\mathcal{P}(S \times U)$-valued random process $\mu (t) = \mu (t, dy,\ dz)$, $t \ge 0$, by
$
 \mu(t) := \delta_{y_n, z_n}, t \in [t(n), t(n+1)), \text{ for } n \ge 0.
$
For $t > s \ge 0$, define $\mu^t _s \in \mathcal{P}(S \times U \times [s,t])$ by
$\mu ^t _s (A \times B) := \frac{1}{t-s} \int_{B} \mu(y, A) \,dy$
for $A, B$ Borel in $S \times U , \ [s,t]$ respectively.
\\
\textbf{(B3)} Almost surely, for $t > 0$, the set $\{ \mu_s ^{s+t}, s \ge 0\}$ remains tight.

Define $\widetilde{h}(x, \nu) := \int h(x, y) \nu(dy, U)$ for $\nu \in \mathcal{P}(S \times U)$.
We use this to define the following \textit{DI}.
 \begin{equation} \label{hhat}
  \dot{x}(t) \in \hat{h}(x(t)), \text{ where } \hat{h}(x) := \{ \widetilde{h}(x, \nu)\ |\ \nu \in D(x)\}.
 \end{equation}
 
We are now ready to state the main result of this paper.
 \begin{theorem}[Stability \& Convergence] \label{convt}
  Under assumptions $(A1)-(A3)$, $(S1)$, $(S2)$ and $(B1)-(B3)$, almost surely the iterates given
  by (\ref{eq:sre}) are stable and converge to an internally chain transitive invariant set associated with
  $
   \dot{x}(t) \in \hat{h}(x(t)).
  $
 \end{theorem}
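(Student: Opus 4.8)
The plan is to split the statement into its two halves, stability and convergence, and to dispatch the first by direct appeal to Theorem~\ref{stability}. Since assumptions $(A1)$--$(A3)$, $(S1)$ and $(S2)$ are exactly the hypotheses of that theorem, we immediately obtain $\sup_{n \ge 0} \lVert x_n \rVert < \infty$ almost surely. Fix a sample point in the probability-one event on which this supremum is finite and on which the martingale sum $\zeta_n := \sum_{k=0}^{n-1} a(k) M_{k+1}$ converges; convergence of $\zeta_n$ follows from the unscaled version of Lemma~\ref{noiseconvergence}, now that boundedness of the iterates makes $(A2)$ supply a summable bound on $E[\lVert a(k)M_{k+1}\rVert^2 \mid \mathcal{F}_k]$. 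Along such a sample path the iterates lie in a fixed compact set, which is precisely the standing hypothesis under which Borkar \cite{BorkarM} carries out the convergence analysis. The remaining work is to show that, with stability in hand, the interpolated trajectory tracks $\dot{x}(t) \in \hat{h}(x(t))$.

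For the convergence half I would reuse the interpolation and time-rescaling machinery of Section~\ref{stabilitytheorem}, but \emph{without} the division by $r(n)$, since boundedness makes rescaling unnecessary. Form the shifted trajectories $\overline{x}(T_n + \cdot)$ in $C([0,T], \mathbb{R}^d)$. Exactly as in the stability proof, a.s.\ convergence of $\zeta_n$ together with point-wise boundedness and an Arzel\`a--Ascoli argument shows that $\{\overline{x}(T_n + \cdot)\}_{n \ge 0}$ is relatively compact, so every sequence admits a limit $x(\cdot)$ of the form $x(t) = x(0) + \int_0^t z(s)\,ds$ with $z$ measurable. The decisive point is to identify $z(t)$ as an element of $\hat{h}(x(t))$, that is, to show the drift $h(x_k, y_k)$ averages over time to $\widetilde{h}(x(t), \nu)$ for some ergodic occupation measure $\nu \in D(x(t))$.

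This identification is where $(B1)$--$(B3)$ enter and is the main obstacle. The argument is that of Borkar \cite{BorkarM}: consider the empirical measures $\mu_s^{s+t}$ of the pair process $(y_n, z_n)$; by the tightness assumption $(B3)$, any sequence $s_j \to \infty$ has a subsequence along which $\mu_{s_j}^{s_j+t}$ converges weakly in $\mathcal{P}(S \times U \times [0,t])$. Using continuity of the transition kernel $(B1)$ together with a martingale decomposition of test functions against $p(dw \mid y,z,x)$, one shows that the time-disintegration of any such weak limit is, for almost every $t$, an ergodic occupation measure in $D(x(t))$; the uniform-on-compacts continuity of $p$ in the $x$ variable lets one replace the slowly varying $x_k$ by the limit $x(t)$, while compactness of $D(x)$ from $(B2)$ keeps the limit inside the admissible set. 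Pushing this averaging through the jointly continuous, linearly bounded map $h$ via $\widetilde{h}$ gives $z(t) \in \hat{h}(x(t))$ for a.e.\ $t$; redefining $z$ on the exceptional null set, as at the end of Lemma~\ref{limitchar}, makes the inclusion hold for every $t$. Hence $\overline{x}(\cdot)$ is an asymptotic pseudotrajectory of the set-valued semiflow associated with $\dot{x} \in \hat{h}(x)$.

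Finally I would invoke the limit-set characterisation for asymptotic pseudotrajectories of differential inclusions (Bena\"{i}m, Hofbauer and Sorin \cite{Benaim05}). The map $\hat{h}$ is Marchaud: convexity and compactness of its values follow from those of $D(x)$ (assumption $(B2)$ and Section~\ref{convergence}) and the linearity of $\nu \mapsto \widetilde{h}(x,\nu)$, point-wise boundedness from the linear growth of $h$, and upper-semicontinuity from that of $x \mapsto D(x)$ (see \cite{BorkarBook}) together with joint continuity of $h$. Since the range of $\overline{x}(\cdot)$ is bounded, its limit set $L(\overline{x})$ is a nonempty, compact, internally chain transitive invariant set of the inclusion; as $\{x_n\}$ shares this limit set with $\overline{x}(\cdot)$, the iterates converge to an internally chain transitive invariant set of $\dot{x}(t) \in \hat{h}(x(t))$. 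The single genuinely delicate step is the weak-convergence-to-ergodic-occupation-measure argument of the previous paragraph; the rest is either a citation of Theorem~\ref{stability} or a repetition of the equicontinuity and noise estimates already in hand.
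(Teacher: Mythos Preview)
Your proposal is correct and follows essentially the same two-step approach as the paper: invoke Theorem~\ref{stability} for stability under $(A1)$--$(A3)$, $(S1)$, $(S2)$, and then obtain convergence to an internally chain transitive invariant set of $\dot{x}(t)\in\hat{h}(x(t))$ from Borkar's analysis under $(B1)$--$(B3)$. The only difference is one of presentation: the paper's proof is a two-line citation of Theorem~3.1 of \cite{BorkarM} as a black box, whereas you have sketched the internals of that result (tightness of empirical measures, identification of limits in $D(x(t))$, Marchaud property of $\hat{h}$, and the Bena\"im--Hofbauer--Sorin limit-set theorem).
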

\begin{proof}
 Under assumptions $(A1)-(A3)$, $(S1)$ and $(S2)$, (\ref{eq:sre}) is stable as a consequence of
 Theorem~\ref{stability}.
 It now follows from \textit{Theorem 3.1} of Borkar \cite{BorkarM}
 that the iterates converge to an internally chain transitive invariant set associated with
  $
   \dot{x}(t) \in \hat{h}(x(t)).
  $
\end{proof}

\section{Applications}
In this section, we present two applications of the theory developed in Sections~\ref{assumptions} - \ref{convergence}. First, we analyze a generalized form of $TD(0)$, a simple yet effective TD algorithm. Second, we analyze an online-TD formulation for supervised learning with delayed feedbacks.

\subsection{Analysis of generalized $TD(0)$} \label{application}
Temporal difference $(TD)$ learning is an  reinforcement learning algorithm.
In this section, we consider a generalized version of online $TD(0)$, a simple yet effective TD algorithm.
Given a policy $\pi$, $TD(0)$ iteratively updates it's estimate $V$ of $V^{\pi}$, where
$V^{\pi}$ is the expected total reward associated with policy $\pi$.
When a non-terminal
state $s_n$ is visited at time $n$, the algorithm updates the estimate, $V(s_n)$, 
based on what happens `after the visit'.
The $TD(0)$ algorithm is given by,
\begin{equation}\label{td0}
 V(s_n) := V(s_n) + a(n) ( r_{n+1} + \gamma V(s_{n+1}) - V(s_n)).
\end{equation}
In the above, $s_n$ is the current state, $s_{n+1}$
is the next state, $r_{n+1}$ is the observed reward and $\gamma$ is the discount factor. 
Note that the starting state, $s_0$, is arbitrarily chosen. 
Define $\mathcal{F}_n := 
\sigma \left( s_0, \dots, s_n, r_1, \dots, r_n \right)$ for $n \ge 1$ and $\mathcal{F}_0 := \sigma \left( s_0 \right)$.
Also, define $M_{n+1} := \left[ r_{n+1} + \gamma V(s_{n+1}) \right] - E \left[ r_{n+1} + \gamma V(s_{n+1}) | 
\mathcal{F}_n \right]$ and $V^{\pi}(s_n) := E \left[ r_{n+1} + \gamma V(s_{n+1}) | 
\mathcal{F}_n \right]$ for $n \ge 0$.

In this section we assume that $s_n$ belongs to a compact state space, $S$. Further,
without loss of generality we may also assume that
\begin{equation}
 V^{\pi}(s_n) - V(s_n) = \Psi(s_n) V(s_n) + \psi(s_n),
\end{equation}
where $\Psi: S \to \mathbb{R}^{d \times d}$ and $\psi: S \to \mathbb{R}^d$.
In other words, $TD(0)$ given by (\ref{td0}) can be rewritten as:
\begin{equation}\label{td0r}
 V(s_n) := V(s_n) + a(n) ( \Psi(s_n) V(s_n) + \psi(s_n) + M_{n+1}).
\end{equation}
For a detailed exposition on temporal difference algorithms
the reader is referred to Tsitsiklis and Van Roy \cite{Tsitsiklis}.
In this section, we impose conditions on maps $\Psi$ and $\psi$ that guarantee the `stability 
and convergence' of the iterates given by $(\ref{td0})$.

\begin{remark}{
It is important to note that our $TD(0)$ algorithm (\textit{cf.} \ref{td0}) is more general than the regular 
$TD(0)$ update
with function approximation, as in (say) Tsitsiklis and Van Roy \cite{Tsitsiklis}. In particular,
the regular $TD(0)$ with function approximation can be written (see \cite{Tsitsiklis}) as in
(\ref{td0}). Note also that unlike the usual analyses of $TD(0)$, we do not assume that the Markov process
$\{s_n\}_{n \ge 0}$ \textbf{(a)} has a finite state and \textbf{(b)} is ergodic under the given stationary policy.}
\end{remark}
We state the first of our two assumptions below.
\\
 \textbf{(T1)} $\Psi: S \to \mathbb{R}^{d \times d}$ and
 $\psi: S \to \mathbb{R}^d$ are continuous maps.
 \\ \indent
  If we define a map $h: \mathbb{R}^d \times S \to \mathbb{R}^d$ as $h(v,s) \mapsto \Psi(s) v + \psi(s)$,
  then we have the following lemma.
  \begin{lemma}
   \label{lemma:A1sat}
   The map $h$ defined above is Lipschitz continuous in the first component. Further, the
   Lipschitz constant does not vary with the second component \textit{i.e.,} $h$ satisfies assumption $(A1)$.
  \end{lemma}
\begin{proof}
  Since $\Psi$ is continuous, it's range, $\Psi(S) \subset \mathbb{R}^{d \times d}$,
 is compact. It follows that
 \[ \lVert h(v_1, s) - h(v_2, s) \rVert \le \lVert \Psi(s) \rVert \times \lVert v_1 - v_2 \rVert
  \le L \lVert v_1 - v_2 \rVert,
 \]
 where $L := \underset{M \in A(S)}{\sup} \lVert M \rVert (< \infty)$.
Clearly, $L$ is independent of the second component.
\end{proof}
For the purpose of this section we assume that the martingale sequence defined earlier in this section, 
$\{M_n\}_{n \ge 0}$, satisfies assumption $(A2)$ and the step-size sequence, $\{ a(n)\}_{n \ge 0}$
satisfies assumption $(A3)$.
We are now ready to define the rescaled family of functions.
For $c \ge 1$, define $h_c(v, s) := \Psi(s)v + \psi(s)/c$, then the upper-limit of 
$\{h_c\}_{c \ge 1}$ is given by $h_{\infty}(v, s) = \{\Psi(s)v\}$. 
\begin{lemma}
 \label{S1sat}
 (\ref{td0}) satisfies assumption $(S1)$.
\end{lemma}
\begin{proof}
Let $c_n \uparrow \infty$, $s_n \to s$ and $\underset{n \to \infty}{\lim}$
$h_{c_n}(v, s_n) = u$. We need to show that $u = h_\infty (v, s)$. Since $\Psi$ is continuous,
$\underset{n \to \infty}{\lim} \Psi(s_n)v = \Psi(s)v$. Since $\psi$ is a bounded function,
$\underset{n \to \infty}{\lim} \psi(s_n)/ c_n = 0$. Hence, we get
$u = \underset{n \to \infty}{\lim} h_{c_n}(v, s_n) = \Psi(s)v \in h_\infty (v, s)$. 
\end{proof}
The following technical result is needed to state our second and final assumption.
\begin{lemma} \label{fundn}
 Let $H: \mathbb{R}^d \to \{ \text{subsets of }\mathbb{R}^d \}$ be a Marchaud map. 
 Let $\mathcal{A}$ be an associated attracting set that is also Lyapunov stable. Let
 $\mathcal{B}$ be a compact subset of
 the basin of attraction of $\mathcal{A}$. Then for all 
 $\epsilon > 0$ there exists $T(\epsilon)$ such that $\Phi_{[T(\epsilon), \infty)}(\mathcal{B})
 \subseteq N^{\epsilon}(\mathcal{A})$.
\end{lemma}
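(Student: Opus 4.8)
The plan is to deduce the uniform attraction on $\mathcal{B}$ from the (a priori non-uniform) attraction supplied by the fundamental neighborhood, by means of a compactness argument for the solution set of $\dot{x}(t) \in H(x(t))$. Fix $\epsilon > 0$ and let $U$ be a fundamental neighborhood of $\mathcal{A}$; we may take $U$ to be open and bounded. By the definition of an attracting set there is a time $T_U = T_U(\epsilon)$ with $\Phi_{[T_U, +\infty)}(U) \subseteq N^\epsilon(\mathcal{A})$. The whole problem is thereby reduced to the following uniform absorption claim: there exists $S_0 < \infty$ such that for every $x \in \mathcal{B}$ and every solution $\mathbf{x} \in \sum$ with $\mathbf{x}(0) = x$ there is a time $\tau = \tau(\mathbf{x}) \le S_0$ with $\mathbf{x}(\tau) \in U$. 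Granting this, the concatenation (semiflow) property of solutions of the autonomous inclusion shows that $\mathbf{x}(\tau + \cdot)$ is again a solution starting in $U$, so $\mathbf{x}(t) \in N^\epsilon(\mathcal{A})$ for all $t \ge \tau + T_U$, and hence for all $t \ge S_0 + T_U$; taking $T(\epsilon) := S_0 + T_U$ then finishes the proof.

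To establish the absorption claim I would argue by contradiction. Suppose no such $S_0$ exists. Then for each $n$ there is $x_n \in \mathcal{B}$ and a solution $\mathbf{x}_n$ with $\mathbf{x}_n(0) = x_n$ that avoids $U$ throughout $[0, n]$. Since $H$ is a Marchaud map, its linear growth bound together with Gronwall's inequality makes $\{\mathbf{x}_n\}$ uniformly bounded and equicontinuous on every compact time interval; by the Arzel\`a--Ascoli theorem and a diagonal extraction, a subsequence converges, uniformly on compacts, to a limit $\mathbf{x}$, and compactness of $\mathcal{B}$ lets us also assume $x_n \to x_\ast \in \mathcal{B}$. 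The closedness of the graph of the solution map for Marchaud inclusions (the convergence theorem of Aubin \& Cellina \cite{Aubin}) guarantees that $\mathbf{x}$ is itself a solution with $\mathbf{x}(0) = x_\ast$. Because $U$ is open and $\mathbf{x}_n(t) \notin U$ for every fixed $t$ once $n$ is large, the limit satisfies $\mathbf{x}(t) \notin U$ for all $t \ge 0$.

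It remains to contradict this. Since $x_\ast \in \mathcal{B}$ lies in the basin of attraction, $\omega_\Phi(x_\ast) \subseteq \mathcal{A} \subseteq U$, and in particular the limit set $L(\mathbf{x})$ of the solution $\mathbf{x}$ is contained in $\mathcal{A}$. If $L(\mathbf{x})$ is nonempty, I pick $y \in L(\mathbf{x}) \subseteq \mathcal{A} \subseteq U$ and a sequence $t_k \to \infty$ with $\mathbf{x}(t_k) \to y$; as $U$ is open this forces $\mathbf{x}(t_k) \in U$ for large $k$, contradicting $\mathbf{x}(t) \notin U$ for all $t$, and proving the absorption claim.

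The main obstacle, and the precise place where both hypotheses on $\mathcal{A}$ are consumed, is ruling out the degenerate possibility that $\mathbf{x}$ escapes to infinity, leaving $L(\mathbf{x}) = \emptyset$ (so that the inclusion $L(\mathbf{x}) \subseteq \mathcal{A}$ becomes vacuous). I expect this forward precompactness to be the hard step. Lyapunov stability is what I would use to overcome it: choosing via Lyapunov stability a neighborhood $N^\beta(\mathcal{A})$ with $\Phi_{[0,+\infty)}(N^\beta(\mathcal{A})) \subseteq N^\delta(\mathcal{A}) \subseteq U$ for some $\delta$ with $N^\delta(\mathcal{A}) \subseteq U$, any trajectory that ever meets $N^\beta(\mathcal{A})$ remains in the bounded set $U$ thereafter and is therefore forward precompact with nonempty limit set. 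Combined with the attracting property of $\mathcal{A}$ this is meant to show that every orbit issuing from the basin is forward bounded, so that $L(\mathbf{x}) \ne \emptyset$, closing the contradiction. I note that without Lyapunov stability the conclusion genuinely fails (an orbit may drift ever further from $\mathcal{A}$ before returning, destroying uniformity), so its use here is essential rather than cosmetic.
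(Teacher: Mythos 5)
Your architecture is genuinely different from the paper's: you argue by contradiction and use compactness of the solution set (Gronwall plus Arzel\`a--Ascoli plus the closedness of the solution set of a Marchaud inclusion under uniform-on-compacts limits), whereas the paper runs a direct covering argument --- for each $x_0 \in \mathcal{B}$ it picks $t(x_0)$ with $\Phi_{t(x_0)}(x_0) \subseteq N^{\delta/4}(\mathcal{A})$, spreads this to a neighborhood $N^{\delta(x_0)}(x_0)$ by upper semicontinuity of the finite-time flow, traps everything thereafter in $N^{\epsilon}(\mathcal{A})$ by Lyapunov stability, and extracts a finite subcover of $\mathcal{B}$, setting $T(\epsilon) := \max_i t(x_i)$. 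Up to your last step the argument is correct. But the last step is a genuine gap, and your proposed repair does not work: the limit trajectory $\mathbf{x}$ produced by your compactness argument avoids $U$ for \emph{all} time, so it never meets $N^{\beta}(\mathcal{A}) \subseteq U$, and the Lyapunov trapping you invoke can never be applied to it. The reasoning is circular --- to rule out escape to infinity you need the trajectory to come near $\mathcal{A}$, and coming near $\mathcal{A}$ is exactly what the contradiction hypothesis denies. Worse, the statement you want (``every orbit issuing from the basin is forward bounded'') is not provable from the paper's definition $B(\mathcal{A}) = \{x \mid \omega_\Phi(x) \subset \mathcal{A}\}$: a solution with $\lVert \mathbf{x}(t) \rVert \to \infty$ has empty tail-limit set and hence contributes nothing to $\omega_\Phi(x)$, so a point can lie in $B(\mathcal{A})$ (with $\omega_\Phi$ even nonempty, coming from a different solution) while some solution from it is unbounded; for such a point the lemma's conclusion itself fails. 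So this step cannot be closed from the stated definitions by any argument.

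What makes the lemma work is a stronger reading of basin membership, and the paper uses it silently in the step ``$\exists\, t(x_0) > 0$ such that $\Phi_{t(x_0)}(x_0) \subseteq N^{\delta/4}(\mathcal{A})$'': being in the basin is taken to mean that the \emph{whole} set-valued orbit --- every solution simultaneously --- eventually enters any prescribed neighborhood of $\mathcal{A}$, which presupposes precisely the forward precompactness you flagged as the hard part. If you grant yourself the same reading, your proof closes immediately and more simply than you propose: choose $\delta$ with $N^{\delta}(\mathcal{A}) \subseteq U$; then the limit solution satisfies $\mathbf{x}(t(x_*)) \in \Phi_{t(x_*)}(x_*) \subseteq N^{\delta}(\mathcal{A}) \subseteq U$, directly contradicting avoidance of $U$, with no limit-set argument at all. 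Note that under this reading Lyapunov stability is never used in your route (it only enters the paper's proof as the trapping device that replaces waiting the extra time $T_U$ inside $U$); the fact that one of the lemma's hypotheses does no work outside your flawed step should itself have been a warning sign. In short: the compactness machinery is sound and is a legitimate alternative to the paper's covering argument, but the escape-to-infinity step is not closed as written, and it must be absorbed into the interpretation of ``basin of attraction'' rather than deduced.
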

\begin{proof}
 Since $\mathcal{A}$ is Lyapunov stable, corresponding to $N^\epsilon (\mathcal{A})$ there exists 
 $N^\delta (\mathcal{A})$
 such that $\Phi_{[0, + \infty)} (N^\delta (\mathcal{A})) \subseteq N^\epsilon (\mathcal{A})$.
 Fix $x_{0} \in \mathcal{B}$. Since $\mathcal{B}$ is contained in the
 basin of attraction of $\mathcal{A}$, $\exists t(x_0) > 0$ such that 
 $\Phi_{t(x_0)}(x_0) \subseteq N^{\delta / 4} (\mathcal{A})$. Further, 
 from the upper semi-continuity of flow it follows that, for all $x \in N^{\delta(x_0)}(x_0)$, 
 $\Phi_{t(x_0)}(x) \subseteq N^{\delta / 4} (\Phi_{t(x_0)}(x_0))$ 
 for some $\delta(x_0) > 0$, see \textit{Chapter 2} of Aubin and Cellina \cite{Aubin}. 
 Hence $\Phi_{t(x_0)}(x) \subseteq N^\delta (\mathcal{A})$ for all $x \in N^{\delta(x_0)}(x_0)$. Since 
 $\mathcal{A}$ is Lyapunov stable, we get  $\Phi_{(t(x_0), + \infty]} (x) \subseteq N^\epsilon (A)$. In this manner for each $x \in \mathcal{B}$
 we calculate $t(x)$ and $\delta(x)$, the collection
 $\left\{N^{\delta(x)}(x) : x \in \mathcal{B} \right\}$
 is an open cover for $\mathcal{B}$. Let
 $\left\{N^{\delta(x_{i})}(x_{i}) \ |\  1 \le i \le m \right\}$ be a finite sub-cover. If
 we define
 $T(\epsilon) := \max \{ t(x_{i}) \ |\  1 \le i \le m \}$ then
 $\Phi_{[T(\epsilon), + \infty)} (\mathcal{B}) \subseteq N^\epsilon(\mathcal{A})$.
\end{proof}
If we define $H(v) := \overline{co} \left( 
\{ \Psi(s)v \ |\ s \in S \}\right)$ for all $v \in \mathbb{R}^d$, 
then it follows from Lemma~\ref{Hismarchaud} that $H$
is Marchaud. We state our second assumption below.
\\
\textbf{(T2)} Let $\epsilon > 0$ and 
$\mathcal{V}: \overline{B}_{1+ \epsilon} (0) \to [0, \infty)$. Let $\Lambda$
be a compact subset of $B_{1} (0)$, clearly
$\underset{u \in \Lambda}{\sup} \ \lVert u \rVert < 1$. The following hold:
 \textbf{(i)} For all $t \ge 0$, $\Phi_{t}(B_{1+ \epsilon} (0)) \subseteq B_{1+ \epsilon} (0)$,
 where $\Phi_t(\cdotp)$ is a solution to the $DI$ $\dot{x}(t) \in H(x(t))$;
 \textbf{(ii)} $\mathcal{V}^{-1}(0) = \Lambda$;
 \textbf{(iii)} $\mathcal{V}$ is a continuous map. For all $x \in \overline{B}_{1+ \epsilon} (0) 
 \setminus \Lambda$, $y \in \Phi_t (x)$ and $t > 0$ we have $\mathcal{V}(y) < \mathcal{V}(x)$.
 
\textbf{Proposition 3.25 from Bena\"{i}m et. al. \cite{benaim05} :} Under $(T2)$,
\textit{$\Lambda$ is a Lyapunov stable attracting set, further 
there exists an attractor, $\mathcal{A}$, contained in $\Lambda$
whose basin of attraction contains $B_{1+\epsilon}(0)$. }
\begin{lemma}
 (\ref{td0}) satisfies assumption $(S2)$.
\end{lemma}
\begin{proof}
Since $\mathcal{A} \subset \Lambda$ and $B_{1+ \epsilon}(0)$ is contained in the basin of attraction
of $\mathcal{A}$, it follows that
$\overline{B}_{1}(0) \subset B_{1+\epsilon}(0)$ is contained in the basin of attraction of $\Lambda$.
Since $\overline{B}_1(0)$ is compact it follows from Lemma~\ref{fundn} that it is
contained in some fundamental neighborhood
of $\Lambda$.
In this section, the attracting set associated with 
$ \dot{x}(t) \in \overline{co} \left( 
\{ A(y)x(t) \ |\ y \in S \}\right)$ in $(S2)$ is $\Lambda$.
\end{proof}
The following theorem is immediate.
\begin{theorem} \label{tdtheorem}
Under assumptions $(A1)$-$(A3)$, $(T1)$, $(T2)$ \& $(B1)$-$(B3)$, almost surely 
the iterates given by (\ref{td0}) are stable and converge to an internally chain transitive
invariant set associated with $\dot{x}(t) \in \hat{h}(x(t))$ ($\hat{h}$ is defined in Section~\ref{convergence}).
\end{theorem}
Let us analyze the special case when $\Psi$ is a constant map \textit{i.e.,} $\Psi(s) = M$ 
for some fixed $M \in \mathbb{R}^{d \times d}$, $s \in S$.
The recursion given by (\ref{td0}) becomes:
\begin{equation} \label{tditM}
 V(s_n) := V(s_n) + a(n) \left[ M V(s_n) + \psi(s_n) + M_{n+1} \right],
\end{equation}
and like before, $\psi : S \to \mathbb{R}^d$ is a continuous map. Clearly,
(\ref{tditM}) satisfies $(T1)$, 
$(A1)-(A3)$ and $(S1)$. The rescaled family of functions are
$ h_c(v, s) = Mv + \psi(s)/c \text{ and } h_{\infty}(v, s) = Mv; $
$H(v) = \overline{co} \left( \underset{s \in S}{\cup} h_\infty(v, s) \right) = \{Mv\}$.
Hence, the \textit{DI} $\dot{v}(t) \in H(v(t))$ is really the \textit{o.d.e.}
$\dot{v}(t) = Mv$, here.

If we assume that all eigenvalues of $M$ have strictly negative real parts, then
origin is the unique globally asymptotic stable equilibrium point 
(a globally attracting set that is also Lyapunov stable) of $\dot{x}(t) = M x(t)$
(see (\textit{11.2.3}) of Borkar \cite{BorkarBook}).

We define the following: 
  $(a)$ $\epsilon := 1$;
 $(b)$ $\mathcal{V}(v): \overline{B}_{2}(0) \to [0, \infty)$ as $\mathcal{V}(v) := \lVert v \rVert$;
 $(c)$ $\Lambda := \{0\}$. 

Solving the \textit{o.d.e.} $\dot{x}(t) = Mx(t)$, we get $\Phi_t (x(0)) = e^{M t} x(0)$ for $t \ge 0$.
Since all the eigenvalues of $M$ have strictly negative real parts, 
we have $\lVert \Phi_t (x(0)) \rVert < \lVert x(0) \rVert$, where $t > 0$ and $x(0) \in \mathbb{R}^d \setminus \{0\}$.  
Hence $\Phi_{t}(B_2 (0)) \subseteq B_2 (0)$ \textit{((T2)(i) is satisfied)}. 
It follows from the definition of $\mathcal{V}$ (see $(b)$) that $\mathcal{V}^{-1}(0) = \Lambda$ \textit{((T2)(ii)  is satisfied)}. 
Finally,
fix $v_0 \in B_{2}(0) \setminus \{0\}$ and $t > 0$, we have
$\lVert \Phi_t (v_0) \rVert <  \lVert v_0 \rVert$, hence $\mathcal{V}(\Phi_t (v_0)) < \mathcal{V}(v_0)$ \textit{((T2)(iii) 
 is satisfied)}. Now, it follows from Theorem~\ref{tdtheorem} 
 that the iterates given by (\ref{tditM}) are `stable and convergent'.
\subsection{A quick note on our assumptions}\label{sec_note}
In this section, we compare our framework with that of \cite{Tsitsiklis}. Traditional analysis of TD by Tsitsiklis and Van Roy \cite{Tsitsiklis} requires the following assumptions, among others:
\textbf{ (i)} The state space is finite.
  \textbf{(ii)} The Markov chain, associated with state evolution, is ergodic.
  \textbf{(iii)} The second moment of the single stage reward function is bounded, \textit{i.e.,} $\mathbb{E} r_{n+1} ^2 < \infty$ for $n \ge 0$.
 These conditions are very restrictive in real-world applications.
 Further, in \cite{Tsitsiklis}, the cost-to-go function is assumed to have the following form:
 $
  J(i_0, \theta) \approx \sum \limits_{k=1}^K \theta(k) \phi_k(i_0),
 $
where $\{\phi_k\}_{1 \le k \le K}$ are the basis functions. 

Below we state the additional stability assumptions in \cite{Tsitsiklis}, following which we briefly discuss how our assumptions differ from \cite{Tsitsiklis}. Note that the state space $S$ in \cite{Tsitsiklis} is finite.
 $\\ \\$
  \textbf{TB1} There exists a function $f: S \mapsto \mathbb{R}^+$ satisfying the following requirements:
    for all $i_0$, $1 \le k \le K$ and $m \ge 0$, we have
   $
    \sum \limits_{\tau = 0}^{\infty} \left\lVert \mathbb{E} \left[\phi_k(i_{\tau}) \phi'_k(i_{\tau + m}) \right] - \mathbb{E} \left[\phi_k(i_{t}) \phi'_k(i_{t + m}) \right] \right\rVert \le f(i_0)
   $
   and
  $
    \sum \limits_{\tau = 0}^{\infty} \left\lVert \mathbb{E} \left[\phi_k(i_{\tau}) r_{\tau + m+1} \right] - \mathbb{E} \left[\phi_k(i_{t}) r_{t + m+1} \right] \right\rVert \le f(i_0)   ,
  $
where $r_{n+1}$ is the reward at time $n$.
  \\
  \textbf{TB2} Given any $q > 0$, there exists $\mu_q$ such that for all $i_0$ and $t$:
$
 \mathbb{E} \left[ f^q (i_t)  \mid i_0 \right] \le \mu^q f^q (i_0).
$

The DI from $(S2)$ of Section~\ref{assumptions} is given by,
$H(x) := \overline{co} \left( \underset{y \in S}{\cup} h_\infty(x, y) \right)$ and $h_\infty(x, y) =$
$Limsup_{c \to \infty} \{h_c(x, y)\}$. 

As stated in \cite{Tsitsiklis}, $TB1$ and $TB2$ can be verified when the state space is finite. Further, $TB1$ and $TB2$ imply stability of TD when combined with previously stated assumptions, such as bounded second moments of the basis and reward functions, ergodicity of the Markov chain, etc., see \cite{Tsitsiklis} for details. On the other hand, our stability assumptions, $(S1)$ and $(S2)$, do not distinguish between finite and infinite state spaces. If $S$ is a compact metric space or if the associated Markov chain is not ergodic, our framework is readily applicable as opposed to \cite{Tsitsiklis}.
\begin{remark}
 In Section~\ref{sec_onlineTD}, our theory is used to provide an analysis of \textit{TD for supervised learning with delayed feedbacks}. For the problem considered therein, the controlled Markov process evolves in a compact state space. In other words, the framework of \cite{Tsitsiklis} cannot be used to provide an analysis of this algorithm.
\end{remark}

\subsection{Stability of online TD for supervised learning} \label{sec_onlineTD}
In this section, we consider the following weather forecasting problem described in \textit{Chapter 11} of Spall~\cite{spall2005}. On every day of the week, we are interested in issuing a forecast for the weather on the following Saturday, using current and past weather conditions and meteorological indicators. Unlike traditional supervised learning, in a TD formulation, the predictor is trained using successive predictions. 

We consider the online implementation of TD for supervised learning with delayed feedback, presented in Chapter 11 of Spall \cite{spall2005}. For this purpose we may use a predictor $f(\theta, \cdotp)$, that is parameterized by $\theta$. In other words, $f$ uses a sequence of inputs $\{x_0, x_1, \ldots, x_N\}$ to predict some (outcome) $Z$. The predictor $f$ is trained to minimize the following expected mean-squared error:
\begin{equation}
 \label{mse}
 \frac{1}{2} \mathbb{E} \left[ Z - f(\theta, x_n) \right]^2.
\end{equation}
Within the context of our weather forecasting problem, $Z$ is Saturday's weather outcome. The $x_i s$ represent the current and past weather conditions, among others, that are used as input for prediction. We consider the following $TD(\lambda)$ approach to training the predictor $f$,  in an online manner, see \cite{spall2005} for details:

 {\footnotesize
\begin{equation}
 \label{td}
 \theta_{n +1} = \theta_{n} + a(n) \left[ f(\theta_n, x_{n+1}) - f(\theta_n, x_{n}) \right] \sum \limits_{i=0}^n \lambda ^{n - i} \left[ \nabla _{\theta} f(\theta, x_i) \right] _{\theta = \theta_n} ,
\end{equation}
}
where
 \textbf{(i)} $0 \le \lambda \le 1$ is a fading factor which gives lower importance to past observations.
 \textbf{(ii)} $\{a(n)\}_{n \ge 0}$ is the standard step-size sequence.

When $\lambda = 0$, (\ref{td}) becomes $TD(0)$. Note that $TD(0)$ does not utilize older predictions in the current training step. Rewriting (\ref{td}) with $\lambda = 0$, we get:
\begin{equation}
 \label{td0_original_6}
 \theta_{n +1} = \theta_{n} + a(n) \left[ f(\theta_n, x_{n+1}) - f(\theta_n, x_{n}) \right] \left[ \nabla _{\theta} f(\theta, x_n) \right] _{\theta = \theta_n}.
\end{equation}
Let us suppose that the predictor is a linear regression function, \textit{i.e.,} $f(\theta, x) = \theta^T x$. For this to be effective, it is imperative to embed the input variables in a higher dimensional feature space. This embedding or feature extraction is often done using deep neural networks. Let $\phi(\cdotp)$ be the given feature function, then we may rewrite (\ref{td0_original_6}) as:
\begin{equation}
 \label{td0_pi_6}
 \theta_{n +1} = \theta_{n} + a(n) \theta_n^T \left( \phi(x_{n+1}) -  \phi(x_{n}) \right) \phi(x_n).
\end{equation}
There are numerous results in literature which discuss the convergence of (\ref{td0_pi_6}). The reader is referred to \cite{Tsitsiklis} or \cite{BertsekasBook} for details. However, all these results prove convergence of the algorithm under strong assumptions. Further it is hard to ensure stability, especially since the state space is continuous. Below, we present a complete analysis using our theory. Note that we consider the state space to be continuous, unlike previous analyses.

First, let us define the following objective function, $h$:
\begin{equation}
 \label{obj_function}
 h(\theta, z_1, z_2) := \left(\theta^T z_1 \right) z_2,
\end{equation}
where $z_1 = \phi(y_0) - \phi(y_1)$ and $z_2 = \phi(y_2)$ for some $\phi(y_0), \phi(y_1)$ and $\phi(y_2)$
in $\mathcal{K}$. Let us also define $S := (\mathcal{K} - \mathcal{K}) \times \mathcal{K}$, where $\mathcal{K} - \mathcal{K} = \{x - y \mid x, y \in \mathcal{K}\}$. In other words, $h$ is a function with domain $\mathbb{R}^d \times S$ and range $\mathbb{R}^d$. Now, (\ref{td0_pi_6}) can be rewritten as:
\begin{equation}
 \label{td0_1_6}
 \theta_{n +1} = \theta_{n} + a(n) h(\theta_n, y_n),
\end{equation}
where $y_n = \left( (\phi(x_{n+1}) -  \phi(x_{n})), \phi(x_n) \right)$ and, so from (\ref{obj_function})
$h(\theta_n, y_n) = \theta_n^T \left( \phi(x_{n+1}) -  \phi(x_{n}) \right) \phi(x_n)$. \textit{Note that the $\{y_n\}_{n \ge 0}$ process is the controlled Markov process of $(A1)(ii)$ in Section~\ref{assumptions}}.
We make the following assumptions on (\ref{td0_pi_6}):

 \textbf{(A)} The feature extracted input variables $\phi(x_n)$ belong to $\mathcal{K}$, which is a compact subset of $\mathbb{R}^d$.
 
 \textbf{(B)} The differential inclusion $\dot{\theta}(t) \in \underset{y \in S}{\cup} h(\theta, y)$ has an attractor inside the unit ball centered at the origin, such that its fundamental neighborhood is the closed unit ball itself.
 
 If we consider the previously mentioned weather forecasting problem, the input vector consists of bounded quantities such as atmospheric pressure, temperature, etc. Hence assumption $(A)$ is satisfied here as with many problems. It is also worth observing that assumption $(B)$ is a recast of $(S2)$ for (\ref{td0_pi_6}).

In Section~\ref{assumptions}, we have presented easily verifiable sufficient conditions for the stability of general recursions such as (\ref{td0_pi_6})/(\ref{td0_1_6}). It follows from the above discussion that (\ref{td0_pi_6}) satisfies $(A1) - (A3)$. In other words, to ensure stability of (\ref{td0_pi_6}), it is sufficient to show that $(S1)$ and $(S2)$ are satisfied.

It follows from the above definition of $h$, that the rescaled family of functions $\{h_c \mid c \ge 1 \}$, are such that $h_c(\theta, y) = h(\theta, y)$ for $\theta \in \mathbb{R}^d,\ y \in S$ and $c \ge 1$. Hence $h_\infty = h$. Since $h$ is continuous in the $y$-coordinate, (\ref{td0_pi_6}) satisfies $(S1)$. Suppose one is able to show that (\ref{td0_pi_6}) also satisfies $(S2)$, then it would follow from Theorem~\ref{stability}, that (\ref{td0_pi_6}) is stable. As stated earlier, it may be observed that $(S2)$ and $(B)$ are equivalent. One way to show that (\ref{td0_pi_6}) satisfies $(B)/(S2)$, is to construct an associated Lyapunov function and use Proposition 3.25 of Bena\"{i}m, Hofbauer and Sorin \cite{benaim05}. Constructing a Lyapunov function is often problem dependent. In summary, to ensure stability of (\ref{td0_pi_6}), one can check that $\dot{\theta}(t) \in \underset{y \in S}{\cup} h(\theta, y)$ has an attractor inside the unit ball centered at the origin, such that its fundamental neighborhood is the unit ball itself.

Once stability is assured, we then proceed to prove convergence of the stochastic iterates in Section~\ref{convergence}. \textit{
The above presented analysis can further be readily extended to $TD(\lambda)$ implementations of supervised learning with non-linear predictors, \textit{i.e.,} for the case of $\lambda \neq 0$.}

\textbf{A note on assumption (B)}. Let us suppose that the state space in the above considered problem is finite. Further, let us suppose that the following discounted version of (\ref{td0_original}) is analyzed:
\begin{equation}
 \label{td0_original}
 \theta_{n +1} = \theta_{n} + a(n) \left[ \gamma f(\theta_n, x_{n+1}) - f(\theta_n, x_{n}) \right] \left[ \nabla _{\theta} f(\theta, x_n) \right] _{\theta = \theta_n},
\end{equation}
where $0 < \gamma < 1$ is the discount factor. Let the feature matrix be full rank (as in \cite{Tsitsiklis}) but the Markov chain be non-ergodic (unlike \cite{Tsitsiklis}). Then, the limiting DI is given by
\begin{equation}
\label{di_note_1}
 \dot{\theta}(t) \in \underset{D \in \bar{D}}{\cup} \Phi^T D (\gamma P - I)\Phi \theta(t),
\end{equation}
where $D$ is a diagonal matrix containing a given stationary distribution of the Markov chain on the diagonal. Let $\bar{D}$ denote the space of all diagonal matrices with stationary distribution as diagonal entries. Additionally, if $\forall D \in \bar{D}$ the diagonal entries are positive (the diagonal entries of any given $D$ sum to one), then $\Phi^T D (\gamma P - I)\Phi$ is a negative definite matrix for each $ D \in \bar{D}$, cf. proof of Theorem 1 in \cite{Tsitsiklis}. Hence, origin is the global asymptotic stable equilibrium point associated with the above DI, (\ref{di_note_1}). In other words, assumption (B) is satisfied. The weather forecasting problem, on the other hand, can be thought of as a shortest path problem with termination being on Saturday, and can be similarly dealt with since a shortest path problem under a proper policy can be cast as a discounted cost problem and vice versa, cf. \textit{Chapter 2.3} of \cite{BertsekasBook}.

\bibliographystyle{abbrv}
\bibliography{references}

\end{document}